\documentclass[10pt, conference, letterpaper]{IEEEtran}
\usepackage{amsmath}
\usepackage{amsthm}
\usepackage{amssymb}
\usepackage{graphicx}
\usepackage{subfigure}
\usepackage{caption}
\usepackage{color}
\usepackage{balance}
\usepackage{cite}
\usepackage{url}
\usepackage{enumerate}
\usepackage{makecell}
\newtheorem{definition}{Definition}

\newtheorem{lemma}{Lemma}
\newtheorem{corollary}{Corollary}
\newtheorem{theorem}{Theorem}
\IEEEoverridecommandlockouts
\begin{document}
\title{Performance and Stability of Barrier Mode Parallel Systems with Heterogeneous and Redundant Jobs}
\author{\IEEEauthorblockN{Brenton Walker \quad\quad Markus Fidler\thanks{This work was supported in part by the German Research Council (DFG) under Grant VaMoS (FI 1236/7-1,2). This work expands on the previous conference paper~\cite{bem-infocom}.}}
\IEEEauthorblockA{Institute of Communications Technology, Leibniz Universit\"{a}t Hannover}}
\maketitle
\begin{abstract}
In some models of parallel computation, jobs are split into smaller tasks and can be executed completely asynchronously.  In other situations the parallel tasks have constraints that require them to synchronize their start and possibly departure times.  This is true of many parallelized machine learning workloads, and the popular Apache Spark processing engine has recently added support for Barrier Execution Mode, which allows users to add such barriers to their jobs. These barriers necessarily result in idle periods on some of the workers, which reduces their stability and performance, compared to equivalent workloads with no barriers.

In this paper we will consider and analyze the stability and performance penalties resulting from barriers.  We include an analysis of the stability of $(s,k,l)$ barrier systems that allow jobs to depart after $l$ out of $k$ of their tasks complete. We also derive and evaluate performance bounds for hybrid barrier systems servicing a mix of jobs, both with and without barriers, and with varying degrees of parallelism.  For the purely 1-barrier case we compare the bounds and simulation results to benchmark data from a standalone Spark system.  We study the overhead in the real system, and based on its distribution we attribute it to the dual event and polling-driven mechanism used to schedule barrier-mode jobs.  We develop a model for this type of overhead and validate it against the real system through simulation.
\end{abstract}
%
%
\section{Introduction}
\label{sec:introduction}

\begin{figure}
    \centering
    \subfigure[Single-queue asynchronous task scheduling]{
        \includegraphics[width=2in,height=1.17in]{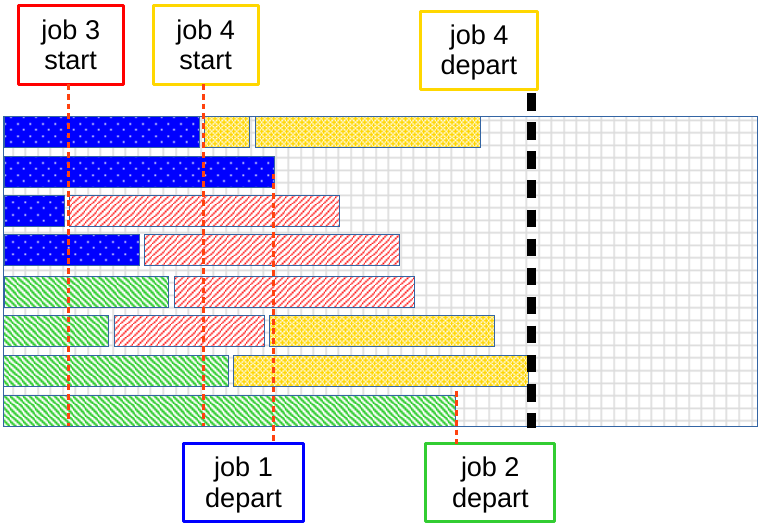}
        \label{fig:nisq-scheduling}
    }
    \subfigure[1-barrier BEM system]{
        \includegraphics[width=2in,height=1.17in]{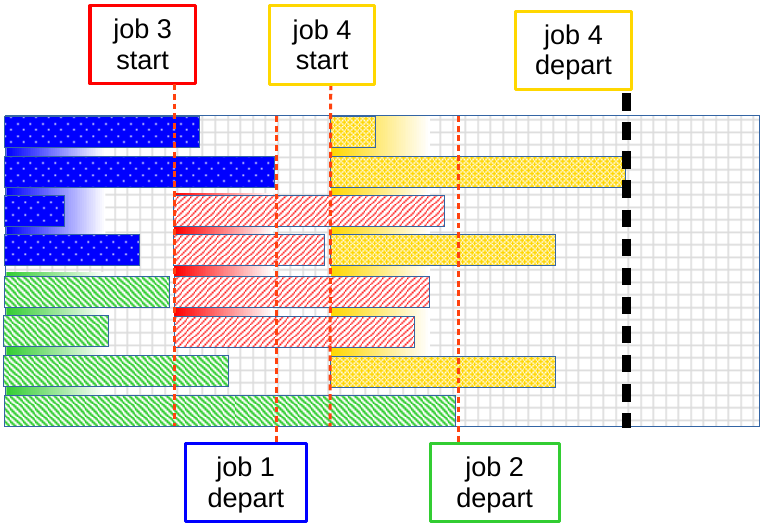}
        \label{fig:bem1-scheduling}
    }
    \subfigure[2-barrier BEM system]{
        \includegraphics[width=2in,height=1.17in]{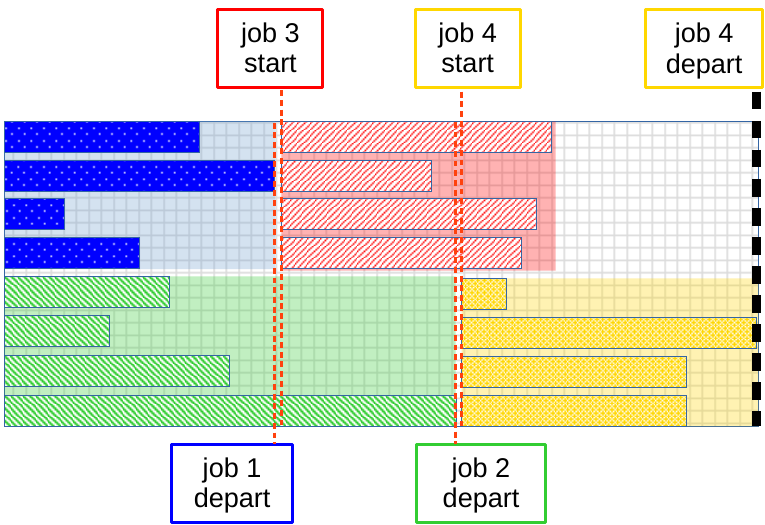}
        \label{fig:bem2-scheduling}
    }
    \caption{Single-queue asynchronous task scheduling vs 1 and 2-barrier BEM systems.
    }
    \label{fig:bem-scheduling}
    \vspace*{-5mm}
\end{figure}

In parallel systems, work arrives in the form of {\bf jobs} which are divided up into $k$ {\bf tasks} and executed in parallel on a collection of $s$ {\bf workers}.  There are many architectures supporting parallelism, and many models used to study them.  Their performance and stability regions depend on the constraints of the system and the variance of the service times of the parallel tasks.  The most fundamental constraint is that a job cannot depart until all (or enough) of its tasks complete.  More specific constraints are whether or not workers can take up new tasks as soon as its previous task is finished ({\bf Fork-Join}~\cite{flatto:forkjoin, nelson:forkjoin} vs {\bf Split-Merge}~\cite{harrison:splitmerge}), and whether tasks are bound to a particular worker upon arrival ({\bf Fork-Join}~\cite{flatto:forkjoin, nelson:forkjoin}), or if all tasks are held in a single queue, and the next task in the queue can be immediately serviced by the next available worker ({\bf Single-Queue Fork-Join}~\cite{nelson:parallelprocessing, walker:icfc2017}).  The last model is the most accurate abstraction for the so-called ``embarrassingly parallel'' map-reduce model supported by popular engines like Spark, Hadoop, and Flink.

With the overwhelming interest in machine learning, however, even systems like these have begun to support constraints known as {\bf barriers}.  A start barrier requires that all tasks begin execution at the same time.  In particular this requires that $k \le s$.  A departure barrier requires that completed tasks block access to their respective workers until the job departs.  Many machine learning workloads have such constraints, as we will discuss in the next section.

Following the terminology used in Apache Spark, we refer to these as {\bf Barrier Execution Mode (BEM)} systems.  They have blocking start barriers, and optionally blocking departure barriers.  We do not assume that the number of tasks equals the number of workers, just that $k \le s$.  If a job with $k$ tasks and a start barrier comes to the front of the queue and fewer than $k$ workers are available, the job must wait until more tasks complete.
A blocking departure barrier requires that the workers cannot service a new task until the entire job of their previous task departs.  We will often refer to these as \mbox{\bf 1-barrier} (start barrier only) and \mbox{\bf 2-barrier} (start and departure barrier) systems in this paper.

Both start and departure barriers necessarily reduce the stability region and performance of parallel systems compared to equivalent systems with no barriers, because up to $k-1$ servers may sit idle while a job is blocked.
The differing behavior of systems with no constraints, one barrier, and two barriers, is illustrated in Fig.~\ref{fig:bem-scheduling} for systems with $k=4$ tasks per job, and $s=8$ workers. Fig.~\ref{fig:bem-scheduling}(a) shows the optimal scheduling of a Single-Queue Fork-Join system with no blocking barriers (conventional Spark). Note that no workers idle as long as there are tasks waiting to be processesed.  Fig.~\ref{fig:bem-scheduling}(b) shows how those same tasks would be scheduled on a BEM system with a blocking start barrier. Note the brief period when three workers are idle before the fourth one becomes available and job~3 can begin service.  Similarly for job~4.  Fig.~\ref{fig:bem-scheduling}(c) shows how the same tasks would be scheduled on a BEM system with blocking start and departure barriers. The start of jobs~3 and~4 are delayed by a larger degree, and there is much more worker idle time.

Note that when $k=s$, that is, each job requires all of the workers to run, both 1 and 2-barrier systems behave equivalently to the classical Split-Merge model.  Due to their synchronous operation, Split-Merge parallel systems are probably the easiest to implement, and have been frequently studied~\cite{harrison:splitmerge, lebrecht:forkjoin, rizk:forkjoin, joshi:knforkjoin}, but they have disastrous scaling performance, and a stability region that shrinks with increasing parallelism.  Letting $k<s$ in BEM systems mitigates this performance penalty to a significant degree.  In more general situations where the number of tasks per job, $k$, varies between jobs, adds another dimension to the performance analysis.

Another strategy to mitigate the performance problems arising from barriers is to abandon straggler tasks.  In an $(s,k,l)$ BEM parallel system , there are $s$ workers, and jobs are divided into $k$ tasks, but a job can depart, and its remaining tasks preempted, when $l\le k$ of its tasks finish.

Finally, given that the presence of BEM jobs causes idle ``bubbles'' in the scheduling of a system, it is natural to wonder how severely the coexistence of BEM and non-BEM jobs on a parallel system affects the stability and performance.

In this paper we will study and quantify the stability and scaling performance of BEM systems.  We show how the stability region of one and two-barrier BEM systems varies with $s$ and $k$, and compute stability bounds for $(s,k,l)$ BEM systems that take into account the fact that some work is discarded.  We find that a 2-barrier $(s,k,l)$ system can exceed the stability region of a conventional BEM system under some circumstances, and that the performance advantage can be greater, depending on the distribution of the task sizes.  For conventional BEM systems with one barrier we use techniques from stochastic network calculus to obtain performance bounds. BEM systems with two barriers are easier to model because they can be represented as M$\mid$G${\mid}m$ queues, see~\cite{bem-infocom}. We show that for a given number of workers, $s$, there is an optimal degree of parallelism, $k$, which depends on $s$ and the system utilization.  Our results are applicable to cases where there are a mix of BEM and non-BEM jobs, and to systems where the jobs' degrees of parallelism, $k$, is variable.  We also carry out benchmarking experiments of Barrier Execution Mode in a standalone Spark cluster to verify the theoretically-predicted performance.  We find that there is a substantial amount of overhead involved in BEM operations that originates from the interplay between polling intervals and job arrivals.  We analyze the overhead distribution and model it in simulation.
%
%
\subsection{Systems with Blocking Barriers}
The primary solution to accelerating the processing of large amounts of data has been to break jobs into many tasks and execute them in parallel.  The constraints under which this parallel execution takes place vary.  In the last decade, the ``embarrassingly parallel'' map-reduce model of parallel execution has become extremely popular, with open-source engines such as Hadoop, Spark, and Flink seeing wide adoption~\cite{hadoop,flink-intro,spark-2016}.  These tools are extremely efficient when the tasks are completely independent, but many algorithms need to be rewritten, or simply cannot be effectively implemented in the map/reduce API.  In particular, any parallel algorithm that requires the Allreduce type of operations in MPI and similar parallel architectures, require task synchronization points. Many machine learning workloads can be effectively parallelized, but in order to ensure that all workers are using the current model state, actually require that all tasks begin simultaneously, and sometimes that they finish simultaneously.  In the abstract language of queueing models, this amounts to adding blocking barriers to the parallel processing model.

In this section we mainly focus on machine learning applications, since that is driving a lot of current interest in parallel computation.  This should not be taken as a dismissal of the many other applications where 1 and 2-barrier parallelism arises.
%
%
\subsubsection{Barriers in Apache Spark}
Until recently, tools like Spark did not support this type of execution.  Project Hydrogen took as motivation the fact that Spark is a popular and powerful tool for data processing and preparation, and that it would be ideal to make high-performance ML tools available directly in Spark~\cite{project-hydrogen}.  To that end, beginning with Spark 2.4, Spark developers implemented support for Barrier Execution Mode RDDs~\cite{spark:bem-ticket}.  These are Resilient Distributed Datasets (RDDs), but with the constraint that the tasks executed on a BarrierRDD must start simultaneously (and any task failures mean the job must restart).  The API also includes support for placing intermediate barriers (synchronization points) in the tasks.

Horovod is a distributed training framework that integrates with many popular ML tools, and was originally implemented using MPI~\cite{horovod-arxiv,mpi-standard}.  Its use of a Ring-Allreduce necessitates start and departure barriers~\cite{patarasuk:ring-allreduce}.  A proprietary version of Horovod reportedly supports execution on Spark clusters and is implemented using BEM~\cite{spark:horovod-runner}, but has since been deprecated in favor of Torch Distributor~\cite{pytorch-ddp, torch-distributor}, which also uses Spark's barrier mode.  There is also an open-source version of Horovod that effects inter-task synchronization, including start and departure barriers on the tasks, but does not actually use the Spark BEM API ~\cite{horovod-on-spark, horovod-arxiv, spark:horovod-uber}.  Other examples of tools that use Barrier Execution Mode in Spark are , sparktorch~\cite{sparktorch-github}, and TensorFlowOnSpark~\cite{tensorflow-on-spark-github}.

Cannon's distributed matrix multiplication algorithm has also been implemented using Spark's Barrier Execution Mode~\cite{Foldi-2020}.  It also takes advantage of the OpenJDK native execution facilities introduced by Project Panama~\cite{openjdk:project-panama}, and the authors' experiments show it to be measurably faster than the implementation in SparkML.  Cannon's algorithm is grid-based, hence the need for start barriers, but the memory requirements are independent of the degree of parallelism.
%
%
\subsubsection{$(s,k,l)$ Barrier Systems}
Some machine learning practitioners have explored abandoning the synchronicity constraints during training in the interest of performance~\cite{hogwild-asyng-sgd,cohen-async-sgd}.  The disadvantage of this asynchronous parallel approach is that some workers may be training against a stale model state.  An intermediate strategy, which we call $(s,k,l)$, of retaining the barriers but preempting and discarding the slowest tasks of barrier workloads, can improve their stability and performance, especially in the 2-barrier case.  This is sometimes also referred to as running redundant tasks, or backup tasks.  This strategy is applicable in cases where the work being performed is to some extent fungible, which is the case with many machine learning workloads.  Discarding some tasks could potentially bias the effective training set, but it does not necessarily stop the model from making progress.  Several papers in the past decade have reported on the practical performance of this strategy, in terms of efficiency and training effectiveness~\cite{tail-at-scale, chen2017revisitingdistributedsynchronoussgd, xu-dynamic-backup-workers, skl-ddppo-iclr}.
%
%
\subsubsection{1 vs 2-Barrier Architectures}
Distributed ML systems usually follow two main architectures, which correspond roughly to 1 and 2-barrier BEM models.  In architectures that use an Allreduce, or are otherwise bulk synchronous, all workers must block until all (or enough) tasks finish, in order to update their collective state.  This is best modeled as a 2-barrier system.  In other architectures there are one or more parameter servers, where workers send their gradient updates when they finish their batch~\cite{scaling-ml-parameter-server}.  In these cases the worker is plausibly free to service tasks of another job, such as running inferences for validation or exploratory training.  This is more like a start barrier only, or 1-barrier system.
%
%
\subsubsection{Hybrid Barrier Systems}
The type of hybrid systems we study in section~\ref{sec:bem-performance} may have a mix of barrier and non-barrier jobs with varying numbers of tasks.  In ML for example inference jobs may come with varying batch sizes.  It also arises in the context of moldable schedulers that modulate the parallelism jobs are allocated at the time when they enter service\cite{cirne:using-moldability,optimal-server-allocation-moldable-jobs-concave,effective-selection-of-partiton-size-moldable}.  A number of contexts outside of parallel computing are described in~\cite{green1980-random-bem}.

We specifically look at the 1-barrier case.
Models similar or equivalent to the 1-barrier BEM architecture with random $k$ have been studied in~\cite{green1980-random-bem} which gives an expression for the Laplace transform of the waiting time for the M$\mid$M$\mid$$s$ case, and in~\cite{federgruen1984-bem-mgc-random} where it is approximated for the M$\mid$G$\mid$$s$ case.  However this only resulted in explicit solutions in the 2-barrier case for $s=2$ workers~\cite{brill-green-1984}.
%
%
\section{Stability in Barrier Execution Mode}
\label{sec:bem-stability}
A system is defined to be stable, if certain stationary/long-term performance measures exist.  I.e., if the backlog or the delay converges in distribution to the distribution of a finite random variable~\cite{chang:dynamicserviceguarantees,luebben:availbw-ton}. If the queueing system is modeled as a Markov chain, a necessary condition for stability is that the underlying Markov chain is recurrent non-null, i.e., that the system regularly empties within finite time~\cite{bolch:queueingnetworks}. This is true when the mean arrival rate is smaller than the mean service rate, i.e. $\lambda < \mu$, or equivalently when the utilization $\varrho = \lambda/\mu < 1$.  For parallel systems with $s$ workers the service rate of one worker is multiplied by $s$, i.e., $\lambda < s\mu$~\cite{bolch:queueingnetworks}. In the following we derive stability results for 1-barrier and 2-barrier systems with or without task redundancy. The results are summarized in Tab.~\ref{tab:stability}.
\begin{table}
\caption{Summary of stability conditions}
\label{tab:stability}
\begin{center}
\begin{tabular}{|c||c|c|} \hline
$s$ servers & 1-barrier jobs & 2-barrier jobs \\
\hline\hline
&&\\[0.01mm]
\makecell[r]{$k$ tasks per job} & \makecell{$\varrho < \frac{1}{\sum_{j=0}^{k-1} \frac{1}{k-j\frac{k}{s}}}$} & \makecell{$\varrho < \frac{1}{\sum_{j=0}^{k-1} \frac{1}{k-j}}$} \\
&&\\[0.01mm]
\hline
&&\\[0.01mm]
\makecell[r]{$k$ tasks per job \\ $l$ must complete \\ (total work)} & \makecell{no simple closed form} &
\makecell{$\varrho < \frac{1}{\sum_{j=0}^{l-1} \frac{1}{l-j\frac{l}{k}}}$} \\
&&\\[0.01mm]
\hline
&&\\[0.01mm]
\makecell[r]{$k$ tasks per job  \\ $l$ must complete \\ (useful work)} & \makecell{no simple closed form} &
\makecell{$\varrho < \frac{1}{\sum_{j=0}^{l-1} \frac{1}{l-j\frac{l}{k}}} - \frac{k-l}{k}$} \\
&&\\[0.01mm]
\hline
\end{tabular}
\end{center}
\vspace*{-3mm}
\end{table}
%
%
\subsection{Stability with Start and Departure Barriers}
\label{sec:bem-2-stability}
In a 2-barrier system (blocking start and departure barriers), if $s$ is an integer multiple of $k$, as in Fig.~\ref{fig:bem-scheduling}, all of the workers can be used.  If not, let $r = s \bmod{k}$.  Then there will always be at least $r$ workers idle, as the jobs' tasks must start and depart in groups of $k$.  For example, if $s=7$ and $k=4$ only one job can execute at a time, leaving three workers permanently idle.  Thus, a BEM system with 2 barriers and exponential inter-arrival times is equivalent to an M$\mid$G${\mid}m$ queue with $m=\lfloor s/k \rfloor$ and its service time distribution is the $k$th order statistic $X_{(k)}$ of the service time distribution of $k$ iid tasks (i.e. the maximum).

\begin{figure}
  \centering
  \includegraphics[width=0.95\columnwidth]{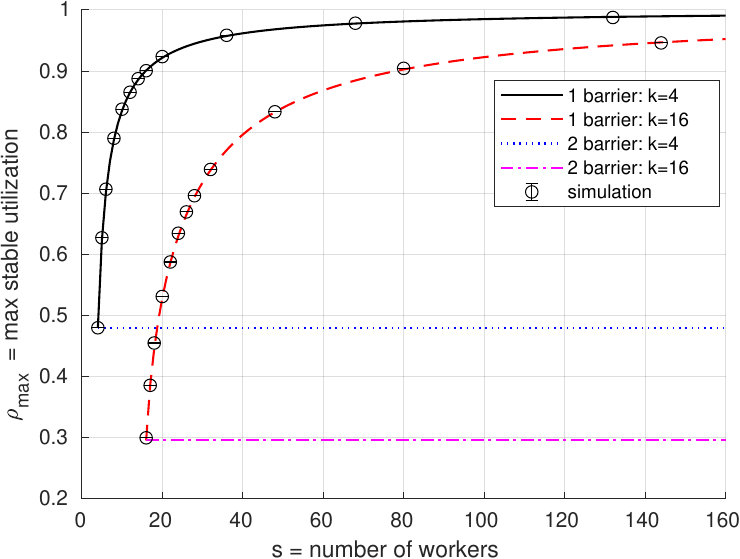}
  \caption{Maximum stable utilization for both 1 and 2-barrier BEM systems for varying numbers of workers.  Lines are analytical results from Eqns.~\eqref{eq:2barrier-stability} and~\eqref{eq:1barrier-stability-bound}.  Individual data points are from corresponding simulations~\cite{forkulator}.}
  \label{fig:utilization-stability}
  \vspace*{-3mm}
\end{figure}

For the following analysis we consider exponential task service times, and assume $s$ is an integer multiple of $k$.  Rewriting the maximum of $k$ exponential random variables with parameter $\mu$ as a sum, we have the minimum of $k$ exponentials as the time to finish the first task, then an additional time until the second task is finished, that is (due to memorylessness) the minimum of $k-1$ exponentials and so on until we obtain Renyi's formula
\begin{equation}
    X_{(k)} =_d \sum_{j=1}^{k} Y_j ,
    \label{eq:renyi-kofk}
\end{equation}
where $Y_j$ are independent exponential random variables with parameter $j\mu$~\cite[(1.9)]{renyi1953theory}. It follows that $X_{(k)}$ has expected value $\mathsf{E}[X_{(k)}] = \sum_{j=1}^k (j\mu)^{-1} = H_k/\mu$, where $H_k$ is the $k$th harmonic number. The sequence $(H_k)_{k\in\mathbb{N}}$ has the asymptotic limit $\gamma + \ln k$, where $\gamma \approx 0.577$ is the Euler constant.  Hence, the system is stable as long as
\begin{equation}
    \lambda < \frac{s}{k} \frac{\mu}{H_k} .
    \label{eq:2barrier-stability-lambda}
\end{equation}
The utilization of the system is $\varrho = k\lambda/(s\mu)$, therefore the maximum stable utilization of a 2-barrier system is given by
\begin{equation}
    \varrho < \frac{1}{H_k} = \frac{1}{\sum_{j=0}^{k-1} \frac{1}{k-j}}.
    \label{eq:2barrier-stability}
\end{equation}

Note that for a given $\varrho$ this depends on the number of tasks per job, $k$, but not on the number of workers, $s$.  This is apparent in Fig.~\ref{fig:utilization-stability}, where, for each $k$, the stability region is constant across all values of $s$.
%
%
\subsection{Stability with Start Barriers}
\label{sec:bem-1-stability}
Next we consider the same BEM system but with only a blocking start barrier.  If the system is continuously backlogged, whenever a job starts service, all workers are busy.  The time until $k$ workers become idle, so that the next job can start service, is the $k$th order statistic of $s$ exponential task service times (again using memorylessness). With the same reasoning as above (minimum of $s$ exponentials until the first worker is idle plus another minimum of $s-1$ exponentials until the next worker is idle and so on until $k$ workers are idle) we have Renyi's formula
\begin{equation}
    X_{(k)} =_d \sum_{j=s-k+1}^{s} Y_j,
    \label{eq:renyi-kofs}
\end{equation}
where $Y_j$ are exponential random variables with parameter $j \mu$. The expected value is
\begin{equation*}
    \mathsf{E}[X_{(k)}] = \frac{1}{\mu} \sum_{j=s-k+1}^{s} \frac{1}{j} = \frac{H_s-H_{s-k}}{\mu} ,
\end{equation*}
(note that $H_0=0$).  The rate at which the queue is served is the inverse of this expression and hence stability is given as long as
\begin{equation*}
    \lambda < \frac{\mu}{H_s - H_{s-k}} .
\end{equation*}
Defining the utilization of the system as $\varrho = k \lambda/(s\mu)$ as before, we have
\begin{equation}
    \varrho < \frac{k}{s(H_s-H_{s-k})} = \frac{1}{\sum_{j=0}^{k-1} \frac{1}{k-j\frac{k}{s}}} ,
    \label{eq:1barrier-stability-bound}
\end{equation}
for stability. We see an improvement of the stability region with declining parallelism ratio $k/s$ and we also see a general improvement over 2-barrier systems~\eqref{eq:2barrier-stability}, since $k/s \le 1$ (as noted, when $k=s$ they both become identical to Split-Merge). In the limit, if $s$ is large compared to $k$ we have
\begin{equation*}
    \lim_{k/s \rightarrow 0} \frac{1}{\sum_{j=0}^{k-1} \frac{1}{k-j\frac{k}{s}}} = 1 ,
\end{equation*}
so that $\varrho < 1$ for stability. We observe this asymptotic convergence to $1$ as $s\rightarrow \infty$ in Fig.~\ref{fig:utilization-stability}. For practical values of $k=10$ and $s=100$ a stable utilization of $0.95$ is possible.
%
%
\subsection{Stability of 2-Barrier $(s,k,l)$ Systems}
\label{sec:skl-2barrier-systems}
The analysis of the performance of two-barrier systems is well-understood, and the stability region turned out to be invariant with $s$ and rather poor.  And yet, two-barrier parallel jobs are important to machine learning and many other computational applications.  One practical optimization ML researchers have found and employed is that of using what we will call $(s,k,l)$ barrier jobs.  In an $(s,k,l)$ system there are $s$ workers, each job comprises $k \le s$ tasks, but the job departs as soon as $l \in [1,k] $ of the tasks complete.  The straggler jobs are killed, and the barrier blocking the job's workers is freed.  This is a reasonable approach to dealing with the variance in task duration in applications where the work being performed is fungible, as is the case in many ML applications~\cite{skl-ddppo-iclr}.

A conventional 2-barrier BEM system with exponential inter-arrival times is equivalent to an M$\mid$G$\mid$m system, where $m=\lfloor s/k \rfloor$, but in the $(s,k,l)$ case, the job service times have the distribution of the $l^{\text{th}}$ order statistic of $k$ iid exponentials.  Following the same reasoning as before, let $X_1, X_2, \ldots, X_k$ be iid exponentials with rate $\mu$.  The time until the first task departs, $X_{(1:k)}$ has the distribution of an exponential with rate $k\mu$.  Because of the memoryless property of the exponential, the time interval between the departure of the first and second task is the min of the remaining $k-1$ exponentials, $X_{(1:k-1)}$, which has the distribution of an exponential with rate $(k-1)\mu$.  Repeating this $l$ times, we see that
\begin{align}
    X_{(l:k)} =_d \sum_{j=0}^{l-1} Y_{k-j}
\end{align}
where $Y_j \sim \text{Exp}(j\mu)$.  Therefore
\begin{align}
    \mathsf{E}\left[ X_{(l:k)} \right] = \frac{1}{\mu} \sum_{j=0}^{l-1} \frac{1}{k-j} = \frac{H_k - H_{k-l}}{\mu}
\end{align}

Because all $k$ tasks of a 2-barrier job release their respective workers in unison when the $l^{\text{th}}$ task finishes, the system is stable as long as the arrival rate of $(s,k,l)$ jobs satisfies
\begin{align}
    \lambda < \frac{s}{k} \frac{1}{\mathsf{E}\left[ X_{(l:k)} \right]} = \frac{s\mu}{k (H_k - H_{k-l})} , \label{eq:skl-2barrier-lambda-max}
\end{align}
where we assume $k$ divides $s$.

The expected total server time used by an $(s,k,l)$ job turns out to have a simple form.  The total server time is the sum of the execution time of all the tasks.  There are two types of tasks: the first $l$ that complete, and the final $(k-l)$, which are terminated.  Let $J_{tot} = \sum_{i=1}^{l}X_{(i:k)}+(k-l)X_{(l:k)}$ be the total server time used by an $(s,k,l)$ job.  Then the expectation for such a job with exponential task service rate $\mu$ can be computed by reordering the sum of harmonic numbers to group like terms.
\begin{align}
    \mathsf{E}&\left[ J_{tot} \right] = \sum_{i=1}^{l} \mathsf{E} \left[ X_{(i:k)} \right] + (k-l)\mathsf{E}\left[ X_{(l:k)} \right] \\
    &= \frac{1}{\mu}\left(\sum_{i=1}^{l} \sum_{j=k-(i-1)}^{k} \frac{1}{j} + (k-l)\sum_{j=k-(l-1)}^{k} \frac{1}{j} \right)  \\
    &= \frac{1}{\mu}\left( \sum_{j=k-(l-1)}^{k} \sum_{i=1}^{j-(k-l)}\frac{1}{j} + (k-l)\sum_{j=k-(l-1)}^{k} \frac{1}{j} \right)  \\
    &= \frac{1}{\mu}\sum_{j=k-(l-1)}^{k} \left( (j-(k-l)) + (k-l) \right) \frac{1}{j} \\
    &= \frac{1}{\mu} \sum_{j=k-(l-1)}^{k} 1 = \frac{l}{\mu}
\end{align}

Computing the useful utilization in an $(s,k,l)$ system is complicated by the fact that the server time spent on the final $(k-l)$ tasks is discarded.  Let $J_{useful}$ be the server time consumed for non-discarded work in an $(s,k,l)$ job.  The easiest way to isolate the expectation of $J_{useful}$ is to subtract off the expectation of the discarded server time.
\begin{align}
    \mathsf{E}\left[ J_{useful} \right] &= \frac{l}{\mu} - (k-l)\mathsf{E}\left[ X_{(l:k)} \right] \\
    &= \frac{1}{\mu}\left( l - (k-l)\left( H_k - H_{k-l} \right) \right)
\end{align}

The utilization of the system is $\varrho_{skl} = \lambda E[J_{tot}] /s$. With the maximum stable arrival rate $\lambda$ in~\eqref{eq:skl-2barrier-lambda-max}
we can compute the maximum stable utilization
\begin{align}
    \varrho_{skl} < \frac{l}{k (H_k - H_{k-l})}
    \label{eq:skl-stability}
\end{align}
and the maximum useful utilization can be simplified as
\begin{align}
    \varrho_{useful} \, < \, & \frac{s \mu (l- (k-l)(H_k - H_{k-l}))}{k s \mu (H_k - H_{k-l})} \nonumber \\
    &= \frac{(l- (k-l)(H_k - H_{k-l}))}{k (H_k - H_{k-l})} \nonumber \\
    &= \frac{l}{k(H_k - H_{k-l})} - \frac{k-l}{k} = \varrho_{skl} - \frac{k-l}{k} \label{eq:skl-stability-useful}
\end{align}

Expanding the harmonic sums in~\eqref{eq:skl-stability} and~\eqref{eq:skl-stability-useful} gives the formulas shown in Table~\ref{tab:stability}.
We note that both $\varrho_{skl}$ and $\varrho_{useful}$ reduce to~\eqref{eq:2barrier-stability}, the stability region of conventional BEM systems, when $l=k$.

Using these bounds, we can compute under which, if any, values of $l$ does using $(s,k,l)$ jobs improve the expected stable load over conventional BEM.  The left side of Fig.~\ref{fig:stability_qbbskl} shows the max stable utilization for a 2-barrier $(s,k,l)$ system with $s=k=32$ and varying $l$, based on the analytical results above, and simulation.  We see that for $l$ slightly less than $k$, the maximum stable useful utilization of the $(s,k,l)$ system exceeds conventional BEM, but for values of $l$ less than $10$ (for $s=k=16$), the $(s,k,l)$ system performs worse.

The benefit of the $(s,k,l)$ strategy will of course depend on the distribution of the task service times.  Exponential task service times allow us to derive analytical results and gain intuition about the situation, but the memoryless nature of the task spacings limit the gain that can be observed here.  If one assumes, a bimodal exponential task service time distribution with $10\%$ of the tasks having rate $\mu/10$ or $\mu/100$, the advantage of using $(s,k,l)$ and preempting stragglers is much more pronounced.  This is mainly because the baseline $\varrho_{bem}$ stability region becomes so much worse, while $(s,k,l)$ scheduling preserves the stability region for most $l$.  This is shown in the right side of Fig.~\ref{fig:stability_qbbskl}, and we see that the system with an optimally chosen $l$ can support a utilization about three times higher than conventional BEM for these parameters.

\begin{figure}
  \centering
  \includegraphics[width=0.95\columnwidth]{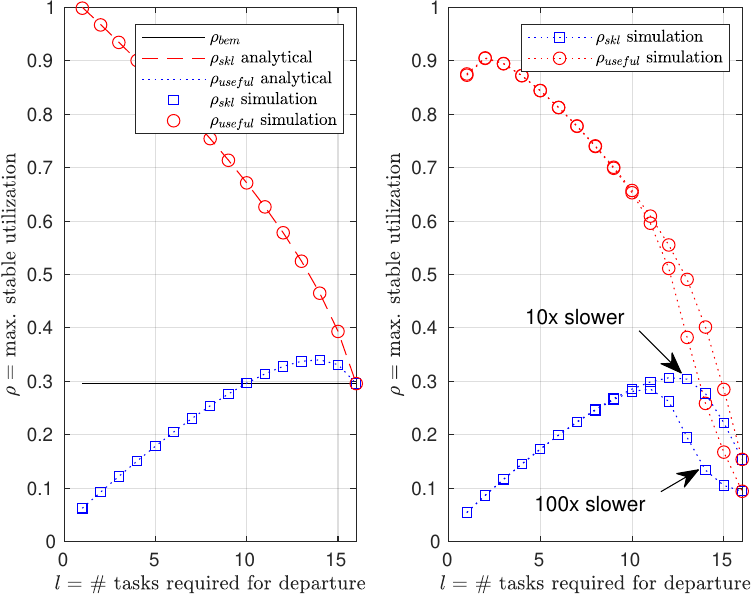}
  \caption{Maximum stable utilization for 2-barrier $(s,k,l)$ BEM system with $s=k=16$ and varying $l$, from the bounds computed in section~\ref{sec:skl-2barrier-systems}, and simulation.  The left plot is for iid exponential task service times.  The right plot is for a bimodal exponential service process, where there is a 10\% probability of receiving slower service.}
  \label{fig:stability_qbbskl}
  \vspace*{-2mm}
\end{figure}
%
%
\subsection{Stability of 1-Barrier $(s,k,l)$ Systems}
\label{sec:skl-1barrier}

The stability situation in the 1-barrier $(s,k,l)$ case is more complicated, since the number of active jobs, and the number of workers they occupy can change.  We will give an outline of how one can still compute the stability region of this type of system  with exponential inter-arrival times and exponential task service times using a continuous-time Markov chain (CTMC) model.

Each job running in a 1-barrier $(s,k,l)$ system has between $(k-l+1)$ and $k$ tasks running.  We represent the state of such a system as a vector of length $l$, $\vec{S}=(c_{k-l+1}, \ldots, c_k)$ where $c_r$ is the number of active jobs with $r$ running tasks.  For any state $\vec{S}$, the number of running tasks is
\begin{align}
    T(\vec{S})=\sum_{r=k-l+1}^k r\cdot c_r .
\end{align}

For example, for $(s=12, k=4, l=3)$, each job can have either $2$, $3$, or $4$ tasks running. The state $S=(1,0,2)$ has one active job with $2$ tasks running and two jobs with $4$ tasks running, so a total of $T(S) = 1\cdot 2 + 2\cdot 4 = 10$ of the $12$ workers are busy.

For the purposes of computing the stability region, we are only interested in states where the system is backlogged.  That is, the number of idle workers is less than $k$, and new jobs are blocked from starting.

Let $\mathcal{S}_i$ for $i=(s-k+1), \ldots, s$ be the set of feasible states with $i$ busy workers:
\begin{align}
    \mathcal{S}_i = \left\{ \vec{S}=(c_{k-l+1}, \ldots, c_k) \mid T(\vec{S})=i \right\}
\end{align}

Then computing the number of states in $\mathcal{S}_i$ is equivalent to computing a restricted version of the integer partition function of $i$ such that each part is in the range $(l+1), \ldots, k$~\cite{andrews2004-integer-partitions}.  Then the total set of feasible states is the union of these $\mathcal{S}_i$
\begin{align}
    \mathcal{S}=\bigcup_{i=s-k+1}^{s}\mathcal{S}_i
\end{align}
The partition function does not have a closed form expression, but it can be computed by recursion or dynamic programming, and the states of the CTMC can be enumerated in this way as well.  Just as with partitions, the states have a partial ordering, but no total ordering.

The actual number of states in the CTMC is sometimes slightly less, because some feasible states are unreachable.  For example, if $(s,k,l)=(15, 5, 3)$, then the state $(5,0,0)$ is technically feasible, but unreachable, because from state $(4,0,0)$ no new jobs can start.  However the state $(0,\ldots, 0, \lfloor s/k \rfloor)$ is always recurrent, and can be used as a starting point for enumerating the other states.

We build a CTMC by enumerating the states, and adding four types of transitions.  The specific transitions possible from a state depend on the parallelism of the active jobs it contains.  Given a state $\vec{S}=(c_{k-l+1}, \ldots, c_k)$, for each $c_r>0$, $\vec{S}$ has one of the following out-transitions with rate $c_r \mu$.

\begin{enumerate}[(i)]
    \item $r>k-l+1$ and $T(\vec{S})>s-k+1$ : \\
        $c_r\rightarrow c_r-1$, $c_{r-1}\rightarrow c_{r-1}+1$ \\
        no new job starts
    \item $r>k-l+1$ and $T(\vec{S})=s-k+1$ : \\
        $c_r\rightarrow c_r-1$, $c_{r-1}\rightarrow c_{r-1}+1$, $c_k\rightarrow c_k+1$ \\
        new job starts
    \item $r=k-l+1$ and $T(\vec{S})>s-l+1$ : \\
        $c_r\rightarrow c_r-1$ \\
        no new job starts
    \item $r=k-l+1$ and $T(\vec{S})\leq s-l+1$ : \\
        $c_r\rightarrow c_r-1$, $c_k\rightarrow c_k+1$ \\
        new job starts
\end{enumerate}

Solving this CTMC gives us a steady-state distribution on the states of a 1-barrier $(s,k,l)$ BEM system in a perpetually backlogged state.  We can use this to compute the maximum sustained job throughput of such a system as a weighted sum of the rates of the associated type (ii) and type (iv) transitions.  This gives the expected rate at which new jobs can start.  We know from the previous section that the expected total server time taken by each job is equal to $\frac{l}{\mu}$, and the expected useful server time taken per job is $\frac{1}{\mu}\left( l - (k-l)\left( H_k - H_{k-l} \right) \right)$.  This allows us to compute the maximum stable utilization, and maximum stable useful utilization levels for any particular values of $(s,k,l)$.

Fig.~\ref{fig:stability_qbskl} shows the maximum stable utilization for 1-barrier $(s,k,l)$ BEM systems for $k=16$, $l=8$, and varying $s$, along with the maximum stable utilization of conventional 1-barrier BEM systems based on solutions of the CTMC and direct simulation.  For comparison it also shows the maximum stable utilization of a conventional BEM system.  We see that the total utilization of the 1-barrier $(s,k,l)$ system fluctuates quite a bit for small $s$, but converges to that of the conventional BEM system as $s$ increases. Also the useful utilization of the $(s,k,l)$ system is much lower.  In 1-barrier systems, workers are not blocked after short-lived tasks depart, so killing straggler tasks does not free any workers that would otherwise be idle.  The $(s,k,l)$ strategy may speed up job departures, but it is not beneficial to the useful system utilization.

\begin{figure}
  \centering
  \includegraphics[width=0.86\columnwidth]{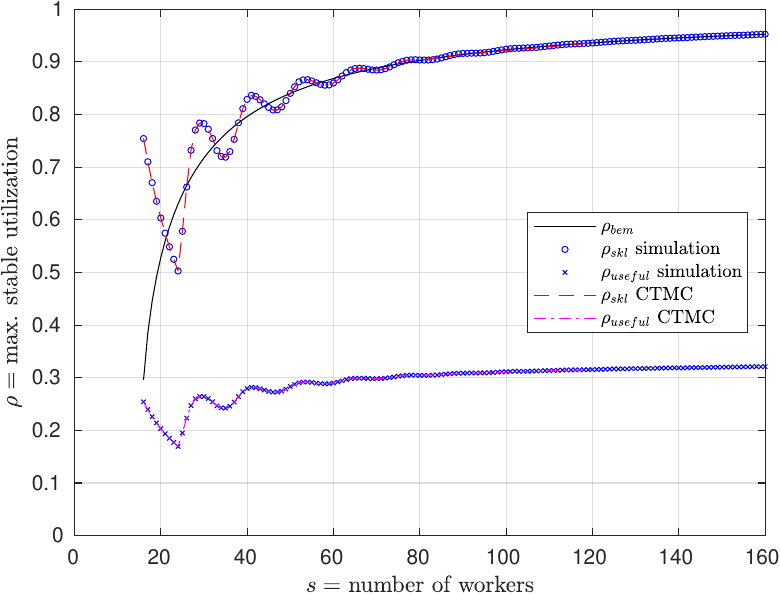}
  \caption{Maximum stable utilization for 1-barrier $(s,k,l)$ BEM system with $k=16$, $l=8$, and varying $s$, from the solution of the CTMC described in section~\ref{sec:skl-1barrier}, and simulation.}
  \label{fig:stability_qbskl}
\end{figure}
%
%
\subsection{Performance with Respect to Task Throughput}
\label{sec:skl-task-throughput}
In the machine learning context, the tasks typically originate from splitting the training data up into batches.  Some may take longer to process than others, but the value of each completed task may be regarded as the same, regardless of how much server time it consumed.  This seems to be the philosophy in~\cite{skl-ddppo-iclr}, for example.  This is a very different metric from traditional utilization, where the throughput of the system is measured in terms of maximizing the fraction of time the workers are busy, and the value of a task is proportional to the server time that it consumes.

We can adjust the analysis above to evaluate the maximum stable task throughput relative to the number of workers.
For the 2-barrier case we use~\eqref{eq:skl-2barrier-lambda-max} to compute
\begin{align}
    \widetilde{\varrho}_{useful} = \frac{s\mu}{k(H_k - H_{k-l})} \frac{l}{s\mu} = \frac{l}{k(H_k - H_{k-l})}
\end{align}
But this is exactly $\varrho_{skl}$ for the 2-barrier case, and it has a maximum utilization of 1 for $l=1$, as seen in Fig.~\ref{fig:stability_qbbskl}.  The only purpose of the parallelism is to quickly determine which task wins the race.  This is an artifact of the properties of min order statistics of exponentials.  We conclude that exponential task service times are not a good model for evaluating BEM systems with regard to the metric of pure task throughput.
%
%
\section{Performance Bounds for BEM Systems}
\label{sec:bem-performance}

In the last decade, stochastic network calculus has been used to study queueing in parallel systems such as Fork-Join, Split-Merge, and many  other variants~\cite{kesidis:forkjoin,rizk:forkjoin,ciucu-replication-systems,rizk-parallel-systems,fidler-tpds,fidler-tiny-tasks,rizk:output-synchrnonization}.  We will extend these results to derive bounds for 1-barrier BEM systems with exponential task service times. Approximations for the mean waiting and sojourn times of 2-barrier BEM systems follow readily from queueing theory, and we will use them for comparison.
%
%
\subsection{Max-Plus System Model}
\label{sec:model}
We will build on the results of~\cite{fidler-tpds} which use max-plus system theory to derive performance bounds for a variety of parallel systems.  First we need to introduce some notation and definitions.

Jobs are labeled in order of arrival with {\bf arrival time} $A(n)$ and {\bf departure time} $D(n)$, for $n \ge 1$.  We denote job inter-arrival times by $A(m,n) = A(n) - A(m)$ for $n \ge m \ge 1$.  The {\bf sojourn time} is defined as $T(n) = D(n) - A(n)$.  We use the following definition from~\cite[Def. 1]{fidler-tpds}, which in turn is adapted from ~\cite[Def. 6.3.1]{chang:performanceguarantees}.  It uses the idea of a {\it service process}, $S(m,n)$, that is related, but not generally identical, to the cumulated service time of jobs $m$ to $n$.
\begin{definition}[Max-plus server]
\label{def:maxplusserviceprocess}
A system with arrivals $A(n)$ and departures $D(n)$ is an $S(m,n)$ server under the max-plus algebra if it holds for all $n \ge 1$ that
\begin{equation*}
D(n) \le \max_{m \in [1,n]} \{ A(m) + S(m,n) \} .
\end{equation*}
It is an exact $S(m,n)$ server if it holds for all $n \ge 1$ that
\begin{equation*}
D(n) = \max_{m \in [1,n]} \{ A(m) + S(m,n) \} .
\end{equation*}
\end{definition}

It follows by insertion of Def.~\ref{def:maxplusserviceprocess} that for an $S(m,n)$ server
\begin{equation*}
T(n) \le \max_{m \in [1,n]} \{ S(m,n) - A(m,n) \} .
\end{equation*}

The function $S(m,n)$ can be thought of as the amount of service provided by the system between jobs $m$ and $n$.  For a simple example, consider a single work-conserving worker servicing jobs with constant rate $\mu$.  If we take
\begin{equation*}
S(m,n) = \frac{n-m+1}{\mu}
\end{equation*}
then the system is an exact $S(m,n)$ server.  For any particular $n$, the $m$ selected by the maximum will be the start of the most recent busy period, or $m=n$ if the system is idle when job $n$ arrives at time $A(n)$. Note that $S(n,n)=\frac{1}{\mu}$ is the time required to service one job.

In order to state network calculus performance bounds, we need to define $(\sigma,\rho)$ arrival and service envelopes~\cite[Def. 7.2.1]{chang:performanceguarantees}.  The parameters $\sigma$ and $\rho$ specify an affine bounding function, i.e., intercept and slope, of the logarithm of the moment generating function (MGF), and can be thought of as a stochastic version of the burst and rate parameters of a leaky-bucket regulator. The MGF of a random variable $X$ is defined as $\mathcal{M}_X(\theta) =\mathsf{E}\bigl[e^{\theta X}\bigr]$ where $\theta$ is a free parameter.  The following definition is the same as~\cite[Def. 2]{fidler-tpds}.
\begin{definition}[$(\sigma,\rho)$-Arrival and Service Envelopes]
\label{def:sigmarho}
An arrival process is $(\sigma_A,\rho_A)$-lower constrained if for all $n \ge m \ge 1$ and $\theta > 0$ it holds that
\begin{equation*}
\mathsf{E}\Bigl[e^{-\theta A(m,n)}\Bigr] \le e^{-\theta (\rho_A(-\theta) (n-m) - \sigma_A(-\theta))} .
\end{equation*}
Similarly, a service process is $(\sigma_S,\rho_S)$-upper constrained if for all $n \ge m \ge 1$ and $\theta > 0$ it holds that
\begin{equation*}
\mathsf{E}\Bigl[e^{\theta S(m,n)}\Bigr] \le e^{\theta(\rho_S(\theta) (n-m+1) + \sigma_S(\theta))} .
\end{equation*}
\end{definition}

In the special case of general arrival processes with independent increments (GI arrivals), $A(m,n) = \sum_{\nu=m}^{n-1} A(\nu,\nu+1)$ has iid inter-arrival times $A(\nu,\nu+1)$. It follows that $\mathsf{E}\bigl[e^{-\theta A(\nu,\nu+1)}\bigr] = \mathsf{E}\bigl[e^{-\theta A(1,2)}\bigr]$ for $\nu \ge 1$. Next, we use that the MGF of a sum of independent random variables is the product of their individual MGFs, i.e.,
$\mathsf{E}\bigl[e^{-\theta A(m,n)}\bigr] = \mathsf{E}\bigl[e^{-\theta A(1,2)}\bigr]^{n-m}$
to derive minimal traffic parameters from Def.~\ref{def:sigmarho} as $\sigma_A(-\theta) = 0$ and
\begin{equation*}
\rho_A(-\theta) = -\frac{1}{\theta} \ln \mathsf{E}\Bigl[e^{-\theta A(1,2)}\Bigr].
\end{equation*}
Similarly for GI service processes, $S(m,n)$ is composed of iid service increments $S(\nu,\nu)$, i.e., $S(m,n) = \sum_{\nu=m}^n S(\nu,\nu)$, so that it has minimal parameters $\sigma_S(\theta) = 0$ and
\begin{equation*}
\rho_S(\theta) = \frac{1}{\theta} \ln \mathsf{E}\Bigl[e^{\theta S(1,1)}\Bigr].
\end{equation*}
Parameter $\rho_A(-\theta)$ decreases with $\theta > 0$ from the mean to the minimum inter-arrival time and $\rho_S(\theta)$ increases with $\theta > 0$ from the mean to the maximum service time.  The two definitions above are enough to prove~\cite[Thm. 1]{fidler-tpds}, which provides the primary bounds on waiting and sojourn time distributions that we are building on.
%
%
\subsection{Performance Bounds for 1-Barrier BEM Systems}
\label{sec:modelbarrierexecutionmode}
The analysis of this model is a generalization of the results for purely BEM systems in~\cite{bem-infocom}, and makes use of the approach in~\cite{fidler-tpds} for Single-Queue Load Balancing systems.  We want to study the performance of hybrid systems that service a mix of BEM and non-BEM jobs with a heterogeneous degrees of parallelism.  We introduce some more flexible notation for expressing the spacings between start times of the jobs.  Here we use the start time of each job's last task as the reference point to quantify these inter-job time intervals.

Let $V_{n,i}$ be the time when task $i$ of job $n$ starts service.  Let $X_j(V_{n,i})$ be the residual service time of the task at worker $j$ at time $V_{n,i}$, or zero if the worker is idle.  Finally, for $l\in [1,s]$, let $Z_l(V_{n,i}) = \{ X_1(V_{n,i}), \dots, X_s(V_{n,i})\}_{(l)}$ be the $l$th order statistic of the residual service times at all the workers at time $V_{n,i}$.  I.e. $Z_1$ would give us the waiting time until the next worker becomes free.

This gives us the machinery we need to express the spacing between jobs in a hybrid system.  In the lemma below, $\Omega(n)$ will be the time between the start of the final task of job $n$, and the start of the final task of job $n+1$.

\begin{lemma}[Generalized Barrier Execution Mode system]
\label{lem:barrierexecutionmode}
Let $K(n)$ be a sequence of discrete random variables with support $[1,s]$.  Consider a system with $s$ parallel work-conserving workers, servicing a mix of jobs with and without a blocking start barrier, where job $n$ comprises $K(n)$ tasks.
Let $Q_i(n)$ denote the service time of task $i \in [1,K(n)]$ of job $n$.  Let $V_{n,i}$, $X_j(V_{n,i})$, $Z_l(V_{n,i})$, and $\Omega(n)$ be as defined above.

For $n \ge m \ge 1$ define
\begin{equation*}
S^G(m,n) = \max_{i \in [1,K(n)]} \Bigl\{ Q_i(n) \Bigr\} + \sum_{\nu=m}^{n-1} \Omega(\nu) ,
\end{equation*}
where
\begin{equation*}
\Omega(n) =
\begin{cases}
Z_{K(n+1)}(V_{n,K(n)}), \qquad \text{job } n+1 \text{ has start barrier,} \\
Z_1(V_{n,K(n)}) + \sum_{i=1}^{K(n+1)-1} Z_1(V_{n+1,i}), \hfill \text{otherwise.}
\end{cases}
\end{equation*}

Then the system is an $S^{G}(m,n)$ server.
\end{lemma}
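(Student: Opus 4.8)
The plan is to reduce the claimed max-plus service bound to a statement purely about the \emph{start} times of the jobs' last tasks, and then to establish that statement by induction on $n$ using the one-step spacing $\Omega$. First I would bound the departure time. Adopt the labeling convention that within each job the tasks are indexed in the order their service begins, so that $V_{n,K(n)}$ is the start time of the last of job $n$'s tasks and $V_{n,i}\le V_{n,K(n)}$ for all $i\in[1,K(n)]$ (for a start-barrier job all these times coincide). Since job $n$ departs only once every one of its tasks has finished, and task $i$ finishes at $V_{n,i}+Q_i(n)$,
\begin{equation*}
D(n)=\max_{i\in[1,K(n)]}\bigl\{V_{n,i}+Q_i(n)\bigr\}\le V_{n,K(n)}+\max_{i\in[1,K(n)]}\bigl\{Q_i(n)\bigr\}.
\end{equation*}
Comparing with the definition of $S^G(m,n)$, it then suffices to prove the start-time bound
\begin{equation*}
V_{n,K(n)}\le\max_{m\in[1,n]}\Bigl\{A(m)+\sum_{\nu=m}^{n-1}\Omega(\nu)\Bigr\},
\end{equation*}
because adding the common term $\max_i\{Q_i(n)\}$ to both sides, together with $D(n)\le V_{n,K(n)}+\max_i\{Q_i(n)\}$, reproduces exactly Def.~\ref{def:maxplusserviceprocess}.

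Next I would prove this start-time bound by induction on $n$. Writing $U_n$ for its right-hand side, the empty inner sum gives $U_1=A(1)$, and the base case holds because the first job arrives to an idle system and its last task starts at $A(1)$, so $V_{1,K(1)}=A(1)$. Splitting off the term $m=n$ (an empty sum equal to $A(n)$) yields the recursion $U_n=\max\{A(n),\,U_{n-1}+\Omega(n-1)\}$, so under the inductive hypothesis $V_{n-1,K(n-1)}\le U_{n-1}$ it is enough to establish the single-step inequality
\begin{equation*}
V_{n,K(n)}\le\max\bigl\{A(n),\,V_{n-1,K(n-1)}+\Omega(n-1)\bigr\}.
\end{equation*}

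To justify this step I would argue from work-conservation and the FIFO discipline of the single queue: job $n$ is the next to enter service after job $n-1$, so on the interval between $V_{n-1,K(n-1)}$ and $V_{n,K(n)}$ the only tasks that claim workers are job $n$'s own. If job $n$ is already waiting when the needed workers free up (the backlogged case, $A(n)\le V_{n-1,K(n-1)}+\Omega(n-1)$), then the two cases in the definition of $\Omega(n-1)$ are precisely the elapsed time until the last task can start: for a start-barrier job $n$ the residual order statistic $Z_{K(n)}(V_{n-1,K(n-1)})$ marks the instant the $K(n)$-th worker frees, at which point $K(n)$ idle workers are simultaneously available; for a barrier-free job the repeated $Z_1$ terms track the sequential release of workers as job $n$'s tasks start one by one, with the residuals re-evaluated at each $V_{n,i}$. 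In either case $V_{n,K(n)}=V_{n-1,K(n-1)}+\Omega(n-1)$. If instead job $n$ arrives after those workers have already idled (the case $A(n)>V_{n-1,K(n-1)}+\Omega(n-1)$), FIFO and work-conservation leave the needed workers free and idle, so $V_{n,K(n)}=A(n)$; the maximum covers both situations.

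I expect the main obstacle to be the careful verification that the order-statistic definition of $\Omega(n-1)$ really equals the physical inter-start spacing $V_{n,K(n)}-V_{n-1,K(n-1)}$ in the backlogged regime. This requires tracking exactly which tasks occupy which workers on that interval, confirming that workers freed before the barrier resolves remain idle rather than being reclaimed, handling the two job types (and their interaction) uniformly, and justifying the re-evaluation of residual service times at each $V_{n,i}$ for barrier-free jobs. Once this bookkeeping is in place, the induction and the departure-time reduction are routine.
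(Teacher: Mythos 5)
Your proposal is correct and follows essentially the same route as the paper: establish the one-step recursion $V_{n,K(n)} = \max\{A(n),\, V_{n-1,K(n-1)} + \Omega(n-1)\}$ (with the two cases of $\Omega$ handling barrier and non-barrier jobs), unroll it to the max-plus form, and then add $\max_i\{Q_i(n)\}$ to bound $D(n)$. The only difference is presentational — you phrase the unrolling as an explicit induction and the recursion as an inequality, while the paper states it as an equality and solves it by recursive substitution.
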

\begin{proof}
The system is initially idle, so $V_{1,i}=A(1)$ for all $i \in [1,K(1)]$ for BEM and non-BEM jobs likewise.

Recall if job $n$ is a BEM job then $V_{n,i}=V_{n,i^\prime}$ $\forall i, i^\prime\in[1,K(n)]$, so we will write $V_{n,\ast}$ in these cases to emphasize that.

If job $2$ is a BEM job, then
\begin{equation}
    V_{2,\ast} = \max \{ A(2), V_{1,K(1)} + Z_{K(2)}(V_{1,K(1)}) \}.
    \label{eq:bem-vn-base}
\end{equation}
If job $2$ is a non-BEM job
\begin{equation*}
    V_{2,i} =
\begin{cases}
    \max \{ A(2), V_{1,K(1)} + Z_1(V_{1,K(1)}) \},  & \text{for } i=1 \\
    \max \{ A(2), V_{2,i-1} + Z_1(V_{2,i-1}) \} , & i \in [2,K(2)].
\end{cases}
\end{equation*}
For $n \ge 2$, if job $n$ is a BEM job
\begin{equation*}
    V_{n,\ast} = \max \{ A(n), V_{n-1,K(n-1)} + Z_{K(n)}(V_{n-1,K(n-1)}) \} ,
\end{equation*}
and if job $n$ is a non-BEM job
\begin{align}
    V_{n,K(n)} = \max \{ A(n), V_{n-1,K(n-1)} + &Z_1(V_{n-1,K(n-1)}) \nonumber\\
                                 + &\sum_{i=1}^{K(n)-1} Z_1(V_{n,i}) \} . \nonumber
\end{align}
To combine both cases into one recurrence relation, we write
\begin{equation}
    V_{n,K(n)} = \max \{ A(n), V_{n-1,K(n-1)} + \Omega(n-1) \} ,
\end{equation}
where $\Omega(n-1) = Z_{K(n)}(V_{n-1,K(n-1)})$ if job $n$ is a BEM job and $\Omega(n-1) = Z_1(V_{n-1,K(n-1)}) + \sum_{i=1}^{K(n)-1} Z_1(V_{n,i})$ if job $n$ is a non-BEM job. By recursive substitution we obtain
\begin{equation}
    V_{n,K(n)} = \max_{m\in [1,n]} \left\{ A(m) + \sum_{\nu=m}^{n-1} \Omega(\nu) \right\} .
    \label{eq:bem-vn-recursion-solution}
\end{equation}

\begin{figure}
  \centering
  \includegraphics[width=0.9\columnwidth]{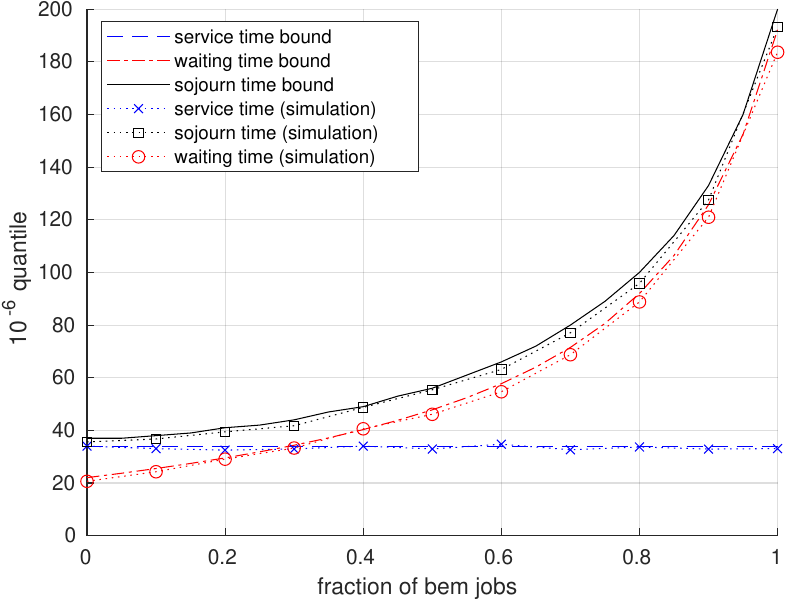}
  \caption{The $10^{-6}$ quantile of service time, waiting time, and sojourn time for a system servicing a hybrid mix of BEM and non-BEM jobs, with utilization $\rho=0.7$.  The $x$-axis spans the range from no BEM jobs to all BEM jobs.}
  \label{fig:bem-hybrid}
  \vspace*{-2mm}
\end{figure}
The departure time for task $i$ of job $n$ is its start time plus its service time
\begin{equation}
    D_i(n) = V_{n,i} + Q_i(n) ,
    \label{eq:bem-vn-departure}
\end{equation}
and the departure time for the entire job $n$ is $D(n) = \max_{i\in[1,K(n)]} \{ D_i(n) \}$. By insertion
\begin{equation*}
    D(n) = \max_{i\in[1,K(n)]} \{ V_{n,i} + Q_i(n) \} \le V_{n,K(n)} + Q(n) ,
\end{equation*}
where $Q(n) = \max_{i\in[1,K(n)]}\{ Q_i(n) \}$ is the maximum task service time.
By substitution of~\eqref{eq:bem-vn-recursion-solution} we have
\begin{align}
    D(n) &\le \max_{m\in [1,n]} \left\{ A(m) + Q(n) + \sum_{\nu=m}^{n-1} \Omega(\nu) \right\} \label{eq:bem-recursion-substitution} ,
\end{align}
and with $S^{G}(m,n) = Q(n) + \sum_{\nu=m}^{n-1} \Omega(\nu)$ the BEM model is an $S^{G}(m,n)$ server.
\end{proof}

\begin{figure}
  \centering
  \includegraphics[width=0.9\columnwidth]{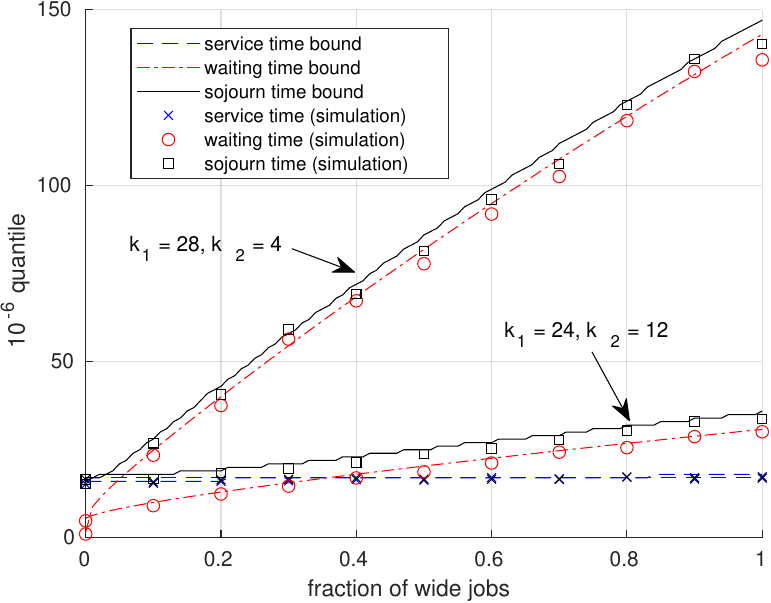}
  \caption{The $10^{-6}$ quantile of execution, waiting, and sojourn times for a system servicing BEM jobs with utilization $\rho=0.4$ a bimodal task size distribution (jobs may have $k_1$ or $k_2$ tasks).  The $x$-axis spans the range from all skinny to all wide jobs.}
  \label{fig:bem-hybrid-k}
  \vspace*{-2mm}
\end{figure}

Lemma~\ref{lem:barrierexecutionmode} holds for general task service times.  In order to obtain concrete bounds, we need to assume a bit more. The difficulty in using the above result is that we need to determine the residual service times $X_j(V_{n,i})$, some of which may be zero, which complicates the behavior of the order statistics. However, if we assume iid exponential task service times, $Q_i(n)$, and replace the zero entries in $X_j(V_{n,i})$ with identical, independent exponential random variables, we obtain a service process that does produce a computable bound.

\begin{corollary}[]
Consider a system  as in Lemma~\ref{lem:barrierexecutionmode} and assume the task service times, $Q_i(n)$, are iid exponential with parameter $\mu$.  Let $\hat{Z}_l$ be the $l$th order statistic of $s$ iid exponentials with parameter $\mu$, and define $\Omega(n)$ and $S(m,n)$ as before, using these $\hat{Z}_l$.
Then the system is an $S(m,n)$ server.
\end{corollary}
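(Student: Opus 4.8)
The plan is to reduce the corollary to Lemma~\ref{lem:barrierexecutionmode} by a pathwise coupling argument, showing that replacing the zero residuals of idle workers with fresh exponentials can only inflate the service process, so that the looser upper bound built from $\hat{Z}_l$ still dominates the departures. Lemma~\ref{lem:barrierexecutionmode} already gives $D(n) \le \max_{m\in[1,n]}\{A(m) + S^G(m,n)\}$, where $S^G$ is assembled from the genuine order statistics $Z_l(V_{n,i})$ of the residual service times $X_j(V_{n,i})$. Since the definition of an $S(m,n)$ server in Def.~\ref{def:maxplusserviceprocess} asks only for an \emph{upper} bound on $D(n)$, it suffices to prove that the modified process satisfies $S(m,n) \ge S^G(m,n)$ on every sample path, because then $\max_m\{A(m)+S(m,n)\} \ge \max_m\{A(m)+S^G(m,n)\} \ge D(n)$.

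First I would fix an evaluation instant $V_{n,i}$ and establish the exponential structure of the residuals there. By the memoryless property of the exponential distribution, the residual $X_j(V_{n,i})$ of any \emph{busy} worker $j$ is itself exponential with parameter $\mu$, independent across workers because the task service times $Q_i(n)$ are iid. For each \emph{idle} worker one has $X_j(V_{n,i})=0$; I would replace that zero by an independent fresh exponential with parameter $\mu$. The busy residuals together with these fresh draws then constitute exactly $s$ iid exponentials, whose $l$th order statistic is $\hat{Z}_l(V_{n,i})$. This is the natural coupling placing $\hat{Z}_l$ and $Z_l$ on the same probability space with the busy-worker residuals shared between them.

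Next I would invoke the elementary monotonicity of order statistics: increasing any subset of the $s$ inputs coordinate-wise can only increase each order statistic. Replacing the zero entries by nonnegative exponentials is precisely such a coordinate-wise increase, so $\hat{Z}_l(V_{n,i}) \ge Z_l(V_{n,i})$ holds pathwise, simultaneously for all $l$ and at every evaluation instant. Feeding this into both branches of the definition of $\Omega$ --- the single term $\hat{Z}_{K(n+1)}(V_{n,K(n)})$ for a barrier job, and the sum $\hat{Z}_1(V_{n,K(n)}) + \sum_{i=1}^{K(n+1)-1}\hat{Z}_1(V_{n+1,i})$ for a non-barrier job --- shows that the modified $\Omega(\nu)$ dominates the genuine one term by term. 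Summing over $\nu$ and adding the common $Q(n)$ yields $S(m,n) \ge S^G(m,n)$, which closes the argument.

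The main obstacle I anticipate is justifying the exponential residuals at the \emph{random} times $V_{n,i}$ rather than at deterministic times. Each $V_{n,i}$ is a stopping time determined by the arrival and scheduling dynamics, so one must appeal to the strong memoryless property with respect to the natural filtration to conclude that, conditioned on the set of busy workers at $V_{n,i}$, their residuals are iid exponential and independent of the past. Once that is in hand, the order-statistic monotonicity and the term-by-term domination of $\Omega$ are routine; the only modeling care required is to keep the fresh exponentials introduced at different instants mutually independent and independent of the genuine residuals, which the coupling provides by construction.
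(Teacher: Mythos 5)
Your proposal is correct and follows essentially the same route as the paper: invoke memorylessness to see the busy workers' residuals as iid exponentials, replace the idle workers' zero residuals with fresh exponentials so that $\hat{Z}_l(V_{n,i})$ dominates $Z_l(V_{n,i})$ pathwise, and conclude $S(m,n)\ge S^G(m,n)$ so the upper bound from Lemma~\ref{lem:barrierexecutionmode} is preserved. Your added care about the stopping-time nature of $V_{n,i}$ is a refinement the paper leaves implicit, but it does not change the argument.
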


\begin{proof}
Here we assume the task service times are iid exponential with parameter $\mu$. At any time $V_{n,i}$ some workers are idle and some are servicing tasks.  By the memorylessness of the exponential distribution, the residual service times of busy workers are also exponential with parameter $\mu$.
Thus, if we replace the zero elements of $\{X_j(V_{n,i})\}_{j\in[1,s]}$ (those corresponding to idle workers), we have a set of $s$ iid exponentials with parameter $\mu$.  Then $\hat{Z}_l(V_{n,i})$ is the $l$th order statistic of this modified set.  Because exponential samples are strictly positive, $\hat{Z}_l(V_{n,i}) \ge Z_l(V_{n,i})$.  Therefore $S(m,n) \ge S^G(m,n)$ and \eqref{eq:bem-recursion-substitution} holds also for $S(m,n)$, and the system is an $S(m,n)$ server.
\end{proof}

In order to derive computable performance bounds, we first need to express $\mathcal{M}_{S(m,n)}$, the MGF of $S(m,n)$,  in a form compatible with Def.~\ref{def:sigmarho}.  The results above combine two types of heterogeneity: a mix of BEM/non-BEM jobs, and a mix of jobs with random degrees of parallelism.  In order to limit the complexity of this step, we will carry out computations, and present results, for these two cases separately.  It should be clear how they could be combined.\\

\noindent {\bf BEM/non-BEM case:}\\
For this case we assume that $K(n)=k$ is fixed for all $n\ge1$.

First recall that for an exponential random variable $X$ with parameter $\mu$, the MGF is $\mathcal{M}_X(\theta) = \mu/(\mu-\theta)$, for $\theta < \mu$.  $Q(n)$ is a max order statistic of $k$ such exponentials, and we can again use~\eqref{eq:renyi-kofk} and \eqref{eq:renyi-kofs} to express order statistics as a sum.  Then the property of MGFs, that $\mathcal{M}_{X+Y}(\theta) = \mathcal{M}_X(\theta) \mathcal{M}_Y(\theta)$ for $X$ and $Y$ independent converts this into a product, giving
\begin{equation}
    \label{eq:bem-mgfQ}
    \mathcal{M}_{Q(n)}(\theta) = \prod_{j=1}^{k} \frac{j\mu}{j\mu - \theta} ,
\end{equation}
for $\theta < \mu$.  This gives us the first half of $\mathcal{M}_{S(m,n)}$.

Next we need to derive  the MGF of $\Omega(n)$.
Because the $\hat{Z}_l$ are order statistics of $s$ iid exponentials with parameter $\mu$, by the same reasoning we have
\begin{equation}
\mathsf{E}[e^{\theta \Omega(n)}|\text{job } n+1 \text{ is BEM}] = \prod_{j=s-k+1}^{s} \frac{j\mu}{j\mu - \theta} ,
\end{equation}
for $\theta < (s-k+1)\mu$.  Since the minimum of $s$ iid exponential random variables with parameter $\mu$ is exponential with parameter $s\mu$
\begin{equation}
\mathsf{E}[e^{\theta \Omega(n)}|\text{job } n+1 \text{ is non-BEM}] = \biggl( \frac{s\mu}{s\mu - \theta} \biggr)^k,
\end{equation}
for $\theta < s\mu$. Assuming jobs are iid BEM/non-BEM with probability $p_{\text{BEM}}$ and $p_{\text{non-BEM}} = 1-p_{\text{BEM}}$ it follows that
\begin{equation*}
\mathcal{M}_{\Omega(n)} (\theta) = p_{\text{non-BEM}} \biggl( \frac{s\mu}{s\mu - \theta} \biggr)^k + p_{\text{BEM}} \prod_{j=s-k+1}^{s} \frac{j\mu}{j\mu - \theta} ,
\end{equation*}
for $\theta < (s-k+1)\mu$.\\

\noindent {\bf Heterogeneous $K(n)$ case:}\\
For this case we assume that all jobs are BEM jobs, and therefore $\Omega(n) = \hat{Z}_{K(n+1)}(V_{n,K(n)})$.  Recall that the $\hat{Z}_{K(n)}$ are $K(n)$th order statistics of $s$ iid exponentials, but because the $K(n)$ are random variables, we need to uncondition. Assuming $K(n)$ are iid with probability $P(K(n) = k) = p(k)$
\begin{align}
\mathcal{M}_{\Omega(n)} (\theta) &= \mathsf{E} \Biggl[ \prod_{j=s-K(n+1)+1}^{s} \frac{j\mu}{j\mu - \theta} \Biggr] \nonumber \\ = & \sum_{k=1}^s p(k) \prod_{j=s-k+1}^{s} \frac{j\mu}{j\mu - \theta} ,
\end{align}
for $\theta < (s-k_{\max}+1)\mu$, where $k_{\max} = \max_n\{K(n)\}$.

The MGF of $Q(n)$ in this case also requires unconditioning.  We have
\begin{equation}
\mathsf{E} [ e^{\theta Q(n)} | K(n) = k ] = \prod_{j=1}^{k} \frac{j\mu}{j\mu - \theta},
\end{equation}
for $\theta < \mu$, and therefore
\begin{equation}
\mathcal{M}_{Q(n)}(\theta) = \mathsf{E} \Biggl[ \prod_{j=1}^{K(n)} \frac{j\mu}{j\mu - \theta} \Biggr] = \sum_{k=1}^s p(k) \prod_{j=1}^{k} \frac{j\mu}{j\mu - \theta} .
\end{equation}

Now, in both cases, taking
\begin{align}
    \sigma_S(\theta) &= \frac{1}{\theta} \ln \left[ \mathcal{M}_{Q(n)}(\theta) \right] - \frac{1}{\theta} \ln \left[ \mathcal{M}_{\Omega(n)}(\theta) \right] , \nonumber \\
    \rho_S(\theta) &= \frac{1}{\theta} \ln \left[ \mathcal{M}_{\Omega(n)}(\theta) \right] \label{eq:bem-rhoS} ,
\end{align}
we can write an expression for the MGF of $S(m,n)$ that satisfies Def.~\ref{def:sigmarho}, i.e.,
$\mathcal{M}_{S(m,n)}(\theta) = e^{\theta(\rho_S(\theta)(n-m+1) + \sigma_S(\theta))}$.
This means that we can apply basic results of the stochastic network calculus to derive performance bounds. A technicality that is due to the different ranges of the free parameter $\theta$ in $\mathcal{M}_{Q(n)}(\theta)$ and $\mathcal{M}_{\Omega(n)}(\theta)$ applies when optimizing $\theta$. For this reason, it is favorable with respect to tightness to first derive a bound on the waiting time, which only depends on $\mathcal{M}_{\Omega(n)}(\theta)$.
%
%
\subsection{Evaluating the Performance Bounds}
In BEM systems all tasks start simultaneously, but in a heterogeneous system they may not.  In order to avoid ambiguity, we define {\bf waiting time} of job $n$
\begin{equation*}
    W(n) = V_{n,K(n)} - A(n)
\end{equation*}
to be the time from its arrival, until the time when its last task starts service. $V_{n,K(n)}$ is given by~\eqref{eq:bem-vn-recursion-solution}.

The following theorem is essentially the same as Thm.~3 of~\cite{fidler-tpds}, but the mechanics of evaluating it will be different.

\begin{theorem}[Barrier Execution Mode system]
\label{th:barrierexecutionmode}
Consider a BEM system as in Lem.~\ref{lem:barrierexecutionmode}, with arrival parameters $(\sigma_A(-\theta),\rho_A(-\theta))$ satisfying Def.~\ref{def:sigmarho}, iid exponential task service times with parameter $\mu$, and parameter $\rho_S(\theta)$ given by~\eqref{eq:bem-rhoS}. For $n \ge 1$, the waiting time satisfies
\begin{equation}
\mathsf{P}[W(n) > \tau] \le \alpha e^{-\theta\tau} .
\label{eq:bem-waitingtime-bound}
\end{equation}
For G$\mid$M arrival and service processes, the free parameter $0< \theta < (s-k_{\max}+1) \mu$ has to satisfy $\rho_{S}(\theta) < \rho_A(-\theta)$ and
\begin{equation*}
\alpha = \frac{e^{\theta \sigma_A(-\theta)}}{1-e^{-\theta (\rho_A(-\theta) - \rho_S(\theta))}} .
\end{equation*}
For GI$\mid$M arrival and service processes, $0 < \theta < (s-k_{\max}+1) \mu$ has to satisfy $\rho_{S}(\theta) \le \rho_A(-\theta)$ and $\alpha=1$.
\end{theorem}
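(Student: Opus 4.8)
The plan is to begin from the characterization of the last-task start time supplied by Lemma~\ref{lem:barrierexecutionmode}. Substituting \eqref{eq:bem-vn-recursion-solution} into the definition $W(n) = V_{n,K(n)} - A(n)$ and writing $A(m,n) = A(n) - A(m)$ gives
\begin{equation*}
W(n) = \max_{m \in [1,n]} \Bigl\{ \sum_{\nu=m}^{n-1} \Omega(\nu) - A(m,n) \Bigr\} .
\end{equation*}
Crucially, the maximum task size $Q(n)$ has dropped out, so the waiting-time bound will depend only on $\mathcal{M}_{\Omega(n)}$, i.e.\ on $\rho_S(\theta)$; this is exactly the technicality flagged before the theorem, and it is what lets the free parameter range up to $(s-k_{\max}+1)\mu$, the radius of convergence of $\mathcal{M}_{\Omega(n)}$.

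For the \textbf{G$\mid$M case} I would apply a union bound over the index $m$ followed by a Chernoff bound,
\begin{equation*}
\mathsf{P}[W(n) > \tau] \le \sum_{m=1}^{n} e^{-\theta \tau}\, \mathsf{E}\Bigl[ e^{\theta \sum_{\nu=m}^{n-1}\Omega(\nu)} \Bigr] \mathsf{E}\Bigl[ e^{-\theta A(m,n)} \Bigr] ,
\end{equation*}
where the factorization uses independence of the service process from the arrivals. Bounding the service factor by the $(\sigma_S,\rho_S)$-envelope restricted to the $\Omega$ terms, $\mathsf{E}[e^{\theta \sum_{\nu=m}^{n-1}\Omega(\nu)}] = \mathcal{M}_{\Omega}(\theta)^{\,n-m} = e^{\theta \rho_S(\theta)(n-m)}$, and the arrival factor by its $(\sigma_A,\rho_A)$-envelope, $\mathsf{E}[e^{-\theta A(m,n)}] \le e^{-\theta(\rho_A(-\theta)(n-m) - \sigma_A(-\theta))}$, collapses each summand into $e^{-\theta\tau} e^{\theta\sigma_A(-\theta)} e^{-\theta(\rho_A(-\theta)-\rho_S(\theta))(n-m)}$. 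Reindexing by $j = n-m$ leaves a geometric series; the stability assumption $\rho_S(\theta) < \rho_A(-\theta)$ makes its ratio strictly less than one, so I would upper bound the partial sum by the infinite sum $1/(1 - e^{-\theta(\rho_A(-\theta)-\rho_S(\theta))})$, yielding $\mathsf{P}[W(n)>\tau] \le \alpha e^{-\theta\tau}$ with the stated $\alpha$.

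For the sharper \textbf{GI$\mid$M case} the geometric-series slack must be removed to reach $\alpha=1$. Here I would instead read \eqref{eq:bem-vn-recursion-solution} as a Lindley recursion $W(n) = \max\{0, W(n-1) + \Omega(n-1) - A(n-1,n)\}$ driven by the iid increments $U(\nu) = \Omega(\nu) - A(\nu,\nu+1)$. Using stationarity to reverse the time index, $W(n)$ has the same distribution as $\max_{0 \le j \le n-1} S_j$ with $S_j = \sum_{\nu=1}^{j} U(\nu)$ and $S_0 = 0$. The relaxed condition $\rho_S(\theta) \le \rho_A(-\theta)$ is exactly $\mathcal{M}_U(\theta) = \mathcal{M}_{\Omega}(\theta)\,\mathsf{E}[e^{-\theta A(1,2)}] = e^{\theta(\rho_S(\theta) - \rho_A(-\theta))} \le 1$ (recall $\sigma_A(-\theta)=0$ for GI arrivals), which makes $e^{\theta S_j}$ a nonnegative supermartingale with $\mathsf{E}[e^{\theta S_0}] = 1$. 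Doob's (Ville's) maximal inequality for supermartingales then gives $\mathsf{P}[\max_{0\le j \le n-1} e^{\theta S_j} \ge e^{\theta\tau}] \le e^{-\theta\tau}$, i.e.\ $\alpha = 1$.

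The main obstacle is the GI$\mid$M refinement. The union/Chernoff route is routine, but obtaining $\alpha = 1$ under mere equality $\rho_S = \rho_A$ requires committing to the genuine iid-increment structure rather than the weaker envelope, and then invoking the supermartingale maximal inequality; one must also verify that the $\Omega(\nu)$ may legitimately be treated as independent across jobs, which ultimately rests on the memorylessness construction of the $\hat{Z}_l$ in the Corollary. A secondary point to handle cleanly is the domain of $\theta$: since $\mathcal{M}_{\Omega(n)}(\theta)$ diverges as $\theta \to (s-k_{\max}+1)\mu$, all MGF manipulations must stay strictly below this threshold, which is why the admissible range is $0 < \theta < (s-k_{\max}+1)\mu$.
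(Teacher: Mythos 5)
Your proposal is correct and follows essentially the argument the paper relies on: the paper does not reprove this theorem but invokes Thm.~3 of the cited prior work, whose proof is exactly your union-bound/Chernoff computation with the geometric series for the G$\mid$M case and the Lindley-recursion/supermartingale (Kingman-type) maximal inequality for the GI$\mid$M case with $\alpha=1$. Your observation that $Q(n)$ cancels from $W(n)=V_{n,K(n)}-A(n)$, so that only $\mathcal{M}_{\Omega(n)}$ and hence $\rho_S(\theta)$ enters with the enlarged admissible range $0<\theta<(s-k_{\max}+1)\mu$, is precisely the technicality the paper flags immediately before the theorem statement.
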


For $\theta \rightarrow 0$ the parameters $\rho_S(\theta)$ and $\rho_A(-\theta)$ approach the mean service time and mean inter-arrival time, respectively, so that $\rho_S(\theta) < \rho_A(-\theta)$ in Thm.~\ref{th:barrierexecutionmode} is the stability condition. Now, we can equate~\eqref{eq:bem-waitingtime-bound} with $\varepsilon$ and invert it to compute a bound on the $(1-\varepsilon)$-quantiles of the waiting time as
\begin{equation}
    \tau_{\varepsilon} = -\frac{1}{\theta}\ln \left( \frac{\varepsilon}{\alpha(\theta)} \right) .
    \label{eq:bem-waitingtime-quantile}
\end{equation}

We can derive a sojourn time bound from the waiting time bound using the fact that $T(n)=W(n)+Q(n)$.
For any fixed value of $k$, the CDF of the maximum of $k$ iid exponentials is
\begin{equation*}
    F_Q\left(\tau | k\right) = (1-e^{-\mu\tau})^k .
\end{equation*}
and by unconditioning on $k$ we obtain the general CDF
\begin{equation*}
    F_Q(\tau) = \sum_{k=1}^{k_{\max}}p(k)(1-e^{-\mu\tau})^k .
\end{equation*}
Note that for BEM jobs, $Q(n)$ is strictly the maximum service time of its $K(n)$ tasks.  For non-BEM jobs, this formulation makes the pessimistic assumption that all the tasks of the job are still executing when the final task starts.

So for $Q_i(n)$ exponential with parameter $\mu$ and a fixed value of $k$, $Q(n)$ has PDF
\begin{equation*}
    f_Q(\tau|k) = k(\mu e^{-\mu\tau})(1-e^{-\mu\tau})^{(k-1)} ,
\end{equation*}
With Thm.~\ref{th:barrierexecutionmode} $W(n)$ is bounded by
\begin{equation}
    F_W(\tau) \ge 1-\alpha e^{-\theta\tau}.
    \label{eq:bem-fw-cdf}
\end{equation}
Note that this can be negative for small $\tau$, so we add the stipulation that $F_W(\tau) = 0$ for $\tau < \frac{1}{\theta}\ln(\alpha)$.

Consider first the GI$\mid$M case when $\alpha=1$.  In this case we do not have to worry about~\eqref{eq:bem-fw-cdf} being negative, and we can compute the CDF of the sojourn time using the convolution
\begin{equation*}
    F_T(\tau)=\int_0^{\tau} F_W(\tau - x) f_Q(x) dx.
\end{equation*}
For any fixed $k$ this evaluates to
\begin{multline}
\label{eq:bem-tbound-FT1}
    F_T\left(\tau | k\right) \ge
    k \sum_{i=0}^{k-1} (-1)^i {{k-1}\choose{i}} \left[ \frac{1}{i+1}(1-e^{-(i+1)\mu\tau}) \right. \\
    \left. - \frac{\mu e^{-\theta\tau}}{(i+1)\mu-\theta}(1 - e^{-((i+1)\mu-\theta)\tau}) \right],
\end{multline}
and unconditioning gives us the sojourn time bound.
\begin{align}
    F_T\left(\tau\right) = \sum_{k=1}^{k_{\max}} p(k) F_T\left(\tau | k\right)
    \label{eq:ft-uncondition}
\end{align}

In the more general G$\mid$M case, we must take into account that $F_W(\tau) = 0$ for $\tau < \frac{1}{\theta}\ln(\alpha)$.  In this case the convolution becomes
\begin{equation*}
    F_T(\tau)=\int_0^{\tau_0} F_W(\tau - x) f_Q(x) dx ,
\end{equation*}
where $\tau_0=\tau - \frac{1}{\theta}\ln(\alpha)$.  For any fixed $k$ this works out to have some extra factors of $\alpha$:
\begin{multline}
    \label{eq:bem-t-bound-convoloution}
    \!\!\!\! F_T(\tau|k) \ge k \sum_{i=0}^{k-1} (-1)^i {{k-1}\choose{i}} \left[ \frac{1}{i+1}(1-\alpha^{\frac{(i+1)\mu}{\theta}}e^{-(i+1)\mu\tau}) \right. \\
    \left. - \frac{\alpha\mu e^{-\theta\tau}}{(i+1)\mu-\theta}(1 - \alpha^{\frac{(i+1)\mu}{\theta}-1}e^{-((i+1)\mu-\theta)\tau}) \right] .
\end{multline}
This can be unconditioned using~\eqref{eq:ft-uncondition} to give the final bound.

The resulting bound is slightly looser for small parallelism ratio, $k/s$, but numerically about the same as $k/s \rightarrow 1$. Note that both~\eqref{eq:bem-tbound-FT1} and~\eqref{eq:bem-t-bound-convoloution} add the constraint $\theta \neq (i+1)\mu$ for \mbox{$i\in \{0,\ldots,k_{\max}-1\}$}. Unfortunately there is no closed-form expression solving this inequality for $\tau$, so computing its solutions depends on a numerical search over the feasible~$\theta$.

\begin{figure*}
  \centering
  \includegraphics[width=0.9\columnwidth]{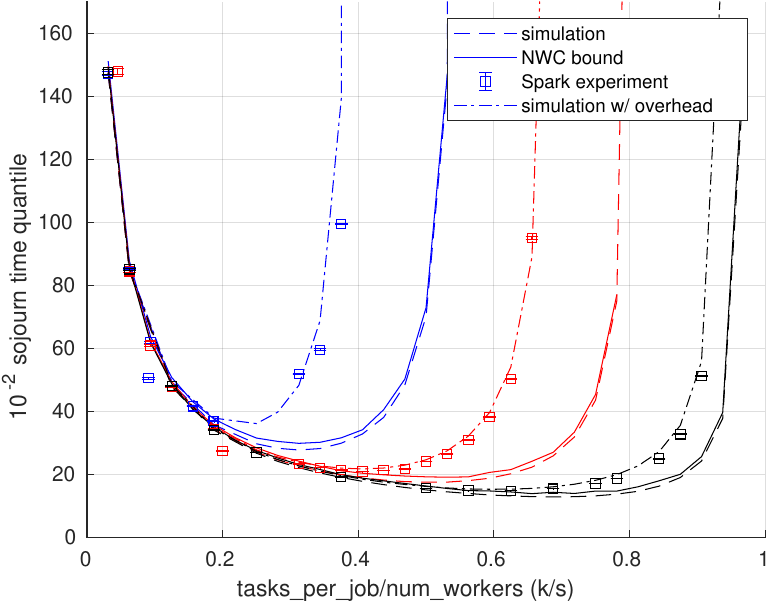} \ \ \ \ \
  \includegraphics[width=0.9\columnwidth]{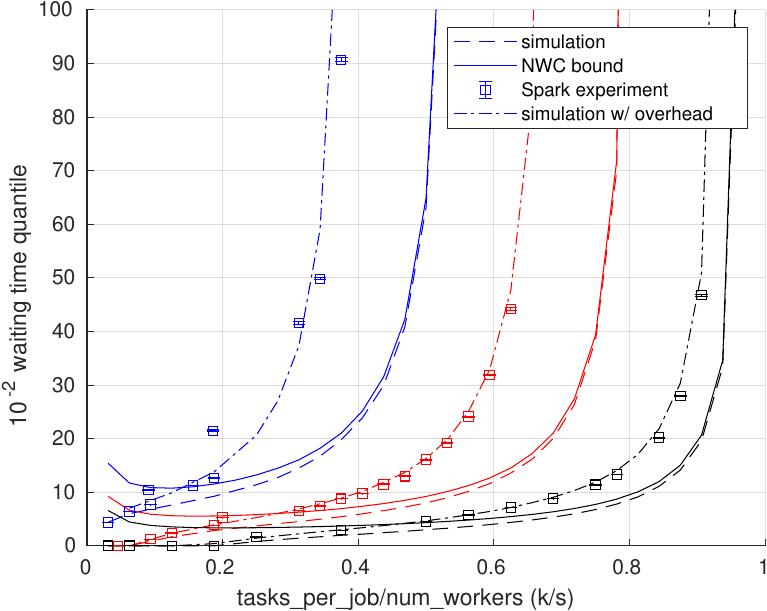}
  \caption{Sojourn and waiting time $10^{-2}$ quantiles for single-barrier BEM systems in Spark, compared to simulation and the analytical bounds, for utilization levels $\varrho\in\{0.3, 0.5, 0.7\}$.}
  \label{fig:bem-spark-sk-wait-sojourn}
\end{figure*}
%
%
\subsection{Results}
Fig.~\ref{fig:bem-hybrid} shows a plot of $10^{-6}$ quantiles of sojourn, waiting, and service time bounds for a BEM/non-BEM hybrid system with $s=32$ workers, $k=16$ tasks per job, with utilization of $\rho=0.7$, compared to simulation.  As expected, the service time bound is the same regardless of BEM/non-BEM fraction.  The waiting time, however, increases with increasing BEM fraction, driving a corresponding increase in sojourn time.

Fig.~\ref{fig:bem-hybrid-k} shows the same quantities for a purely BEM system with varying numbers of tasks per job and utilization of $\rho=0.4$.  Two examples are shown, both with a bimodal distribution for $K(n)$. The $x$-axis ranges over different fractions of small vs large-$k$ jobs.  We see that as the fraction of large-$k$ jobs increases, so do the waiting and sojourn times.  Also the increase is closer to linear in the fraction of large vs small jobs.  This is also expected. Both the distribution of BEM/non-BEM jobs and $K(n)$ influence the performance bound through the limits they put on $\theta$ in Thm.~\ref{th:barrierexecutionmode}, but the categorical distribution of $K(n)$ also manifests itself as the final $F_W$ and $F_T$ bounds become sums weighted by the PMF of $K(n)$.

The plots in Fig.~\ref{fig:bem-spark-sk-wait-sojourn} shows $10^{-2}$ quantiles of sojourn and waiting time bounds, along with simulation results, for purely non-hybrid systems, servicing only BEM jobs with fixed $k$ for utilization of $\varrho\in\{0.3, 0.5, 0.7\}$.  The $x$-axis used is the parallelism ratio, $k/s$, and the sojourn time results show the characteristic U-shape, with optimal performance near the center of the stability region.  The increase in sojourn time for small $k/s$ is due to longer execution times, as the jobs have less parallelism, and each task is required to do more work.  The increase as $k/s$ approaches the edge of the stability region is due to increased waiting time.  As predicted in section~\ref{sec:bem-1-stability} the stability region decreases for increasing $k$. The plots also contain data points from experiments on a stand-alone Spark cluster and simulations which include a model for system overhead, which will be discussed in section~\ref{sec:bem-spark-overhead}.
%
%
\section{BEM Spark Experiments and Overhead Model}
\label{sec:bem-spark-overhead}

\begin{figure*}
  \centering
  \includegraphics[width=0.9\columnwidth]{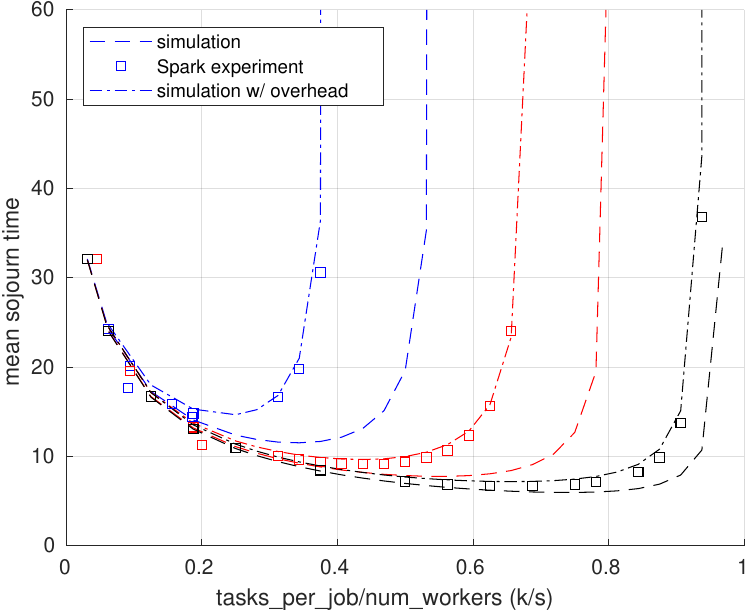} \ \ \ \ \
  \includegraphics[width=0.9\columnwidth]{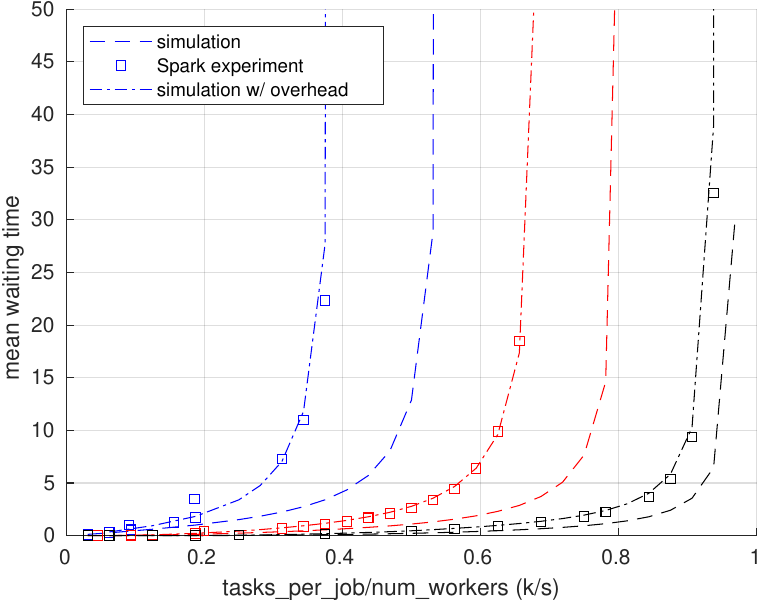}
  \caption{Sojourn and waiting time means for single-barrier BEM systems in Spark, compared to simulation and the analytical bounds, for utilization levels $\varrho\in\{0.3, 0.5, 0.7\}$.}
  \label{fig:bem-spark-sk-wait-sojourn-mean}
\end{figure*}

For the 1-barrier purely BEM system, we have performed comparable experiments on a standalone Spark cluster (v3.2.4).  In order to keep the test environment as controlled as possible we used $s=32$ single-core workers running in Docker containers on nodes in our Emulab testbed (now Cloudlab~\cite{cloudlab}).  Each of our physical servers has 16 logical cores, and hosted at most 12 of these Docker containers.
For these experiments we use an exponential arrival process with rate $\lambda$ and the tasks contain dummy workloads with execution time drawn from an exponential task service time distribution with mean $1/\mu = s/k$ seconds.  This ensures that with varying $k$, the rate of work arriving at the system is only a function of the arrival rate, and the system will have a utilization level of $\varrho = k \lambda/(s \mu)=\lambda$.  In the figures we plot the $10^{-2}$ quantiles rather than $10^{-6}$ because it would require orders of magnitude more data points, and these experiments are very time and energy intensive.

Figure~\ref{fig:bem-spark-sk-wait-sojourn} shows $10^{-2}$ quantiles, and Fig.~\ref{fig:bem-spark-sk-wait-sojourn-mean} shows means, of job sojourn and waiting times for these experiments for utilization levels of \mbox{$\varrho\in\{0.3, 0.5, 0.7\}$}.  The plots also include data from simulation and the analytical bounds presented in section~\ref{sec:bem-performance}.  We see that the waiting times in the real Spark system increase with increasing $k/s$, and that the sojourn times have the same characteristic U-shape.
The increase in sojourn time due to increasing execution time as $k/s$ gets smaller is exactly matched in the real system.  For larger $k/s$, however, we see that the waiting times in the real Spark system diverge more quickly than our simulation or model had predicted.  The stability regions of the system are correspondingly smaller, and the discrepancy becomes more dramatic with increasing load.

In order to understand the discrepancy, we thoroughly investigated the sources of overhead experienced by Spark jobs operating in BEM mode.
In our experiments, the actual execution time of each task can be controlled with precision on the order of 1~ms.  The sojourn time discrepancies observed, however are caused by overhead on the order of hundreds of ms.  In BEM mode all tasks of each job are executing at the same time, so there is no inter-task overhead to measure.  We consider as overhead anything that:
\begin{itemize}
    \item prevents a new job from starting after the the last task blocking it completes.
    \item prevents a job from departing (the JobFinished event in Spark) after its tasks are complete
\end{itemize}
We are interested in two general classes of overhead:
\begin{itemize}
    \item {\bf Blocking overhead} prevents new tasks/jobs from starting, when sufficient resources are available.
    \item {\bf Non-blocking overhead} is overhead that delays the departure of a job, even after its tasks have completed, but does not block subsequent jobs and tasks from using that job's workers.
\end{itemize}

An example of blocking overhead in Spark is task deserialization time, which is time spent transferring and unpacking the JARs and other task-related data on the workers.  This does not start until after the previous job departs and the worker is allocated the new job/task, and the new task cannot start until it is complete. Non-blocking overhead in Spark is more difficult to attribute to any particular source, since it is not explicitly included in any of Spark's logged statistics~\cite{bora-tiny-tasks-tpds}.  In trying to quantify it for BEM jobs, we found that it was subsumed by larger sources of overhead.

The largest source of overhead for Spark jobs in Barrier Execution Mode turns out to arise from the mechanism Spark uses to detect when enough workers are available to start the next job.  Each time a task finishes, or otherwise changes state, the \texttt{CoarseGrainedExecutorBackend} triggers a \texttt{statusUpdate()} in the driver's \texttt{CoarseGrainedSchedulerBackend}, which triggers the backend to make a single-executor resource offer to the scheduler.  This optimization is presumably to avoid iterating over all executors each time a task finishes, which would have time complexity of approximately $\mathcal{O}(s\cdot k)$ times the rate at which stages are processed, which could become cumbersome on extremely large clusters.  The backend also has a global version of \texttt{makeOffers()} which {\it does} iterate over all of the executors, and this is triggered when new stages are submitted (i.e. new jobs arrive) and periodically when the revive timer fires, which happens at $1000$ms intervals.

The fact that the scheduler is not offered a complete set of resources immediately when tasks finish, results in blocking overhead for all queued jobs, which dwarfs all other sources of overhead.  The distribution of this overhead has two parts.  Since the phase of the polling interval and the completion time of the tasks are effectively independent, this creates blocking overhead with a uniformly random distribution on the interval $[0,1]$ second.  Let $X_{timer} \sim \text{Unif}(0,1000)$ms.  Because the arrival of new jobs also triggers a global offer to the scheduler, there is a second component $X_{arrival}\sim \text{Exp}(\lambda)$ for exponential job arrivals.  The actual blocking overhead due to this waiting time $Y\sim \min(X_{timer}, X_{arrival})$.  Since they are independent, we can compute the distribution of the amount of time a job at the head of the queue has to wait, between when $k$ workers become available and when it is offered those workers, by taking the product of the two CCDFs:

\begin{align}
    \mathsf{P}(Y > y) &= \mathsf{P}(X_{timer} > y)\mathsf{P}(X_{arrival} > y) \\
    &= \frac{1000ms - y}{1000ms} e^{-\lambda y}  \ \ (0 \leq y \leq 1000ms)
\end{align}
And differentiating the CDF, $(1-\mathsf{P}(Y > y))$, gives the PDF.
\begin{align}
     f_Y(y) = \frac{1}{1000ms} e^{-\lambda y} \left( 1 + \lambda (1000ms - y) \right)
     \label{eq:spark-bem-overhead-pdf}
\end{align}

Fig.~\ref{fig:bem-spark-task-wait-overhead} shows a histogram of the time between $k$ workers becoming available and the first task of the next job starting for a Spark experiment with $s=32$, $k\in\{6,11\}$, $\varrho=0.7$, with the scaled PDF~\eqref{eq:spark-bem-overhead-pdf} overlaid for comparison.  We see that they match very well.

To validate this further, we implemented a blocking overhead model in our simulation, which adds to the service time of every task of every job.  In principle, all tasks are hit by the same revive timer and job arrival events.  But since the tasks' finish times are independent, the amount of blocking overhead each task experiences is independent.  There is also overhead from sources such as task deserialization, scheduling, and results gathering, and we have made extensive measurements of these, but those types of overhead are dwarfed in barrier execution mode by the time spent waiting for a global offer from the scheduler backend, so we do not make an effort to include them in our model.

We see in Figs.~\ref{fig:bem-spark-sk-wait-sojourn} and~\ref{fig:bem-spark-sk-wait-sojourn-mean} that the simulation with our overhead model matches both mean and quantiles of both sojourn and waiting times of the real system very closely.  Also, the effect of the waiting overhead, as a function of $k/s$, is more dramatic for higher utilization.  This is expected, since the overhead only affects jobs coming out of the queue, not those that are serviced immediately on arrival.

The blocking overhead we have described occurs due to the initial scheduling of stages in Spark.  Spark's BEM also supports placing synchronization points in the tasks via the \texttt{barrier()} or \texttt{allGather()} functions in the \texttt{BarrierTaskContext}.  Interestingly, a similar blocking overhead arises in this case, but from a different source, and the resulting overhead distribution is different.  In this case, when the tasks reach the synchronization point, they must wait for all other tasks to reach the same point before proceeding.  To accomplish this, the \texttt{BarrierTaskContext} class uses an abortable future to poll the \texttt{BarrierCoordinator} through the \texttt{RequestToSync()} method.  The future is checked from a loop containing a $1000$ms sleep.  Each task has its own \texttt{BarrierCoordinator}, so their polling timers are independent, and they are started when the task reaches the barrier.  In principle, the final task to reach the barrier does not have to wait, but the previous $k-1$ tasks will have to wait in the loop with frequency $1$Hz.  This means that all tasks, except possibly the last one to reach the barrier, have an additional blocking overhead with distribution $\text{Unif}(0,1)$ seconds.

\begin{figure}
  \centering
  \includegraphics[width=0.9\columnwidth]{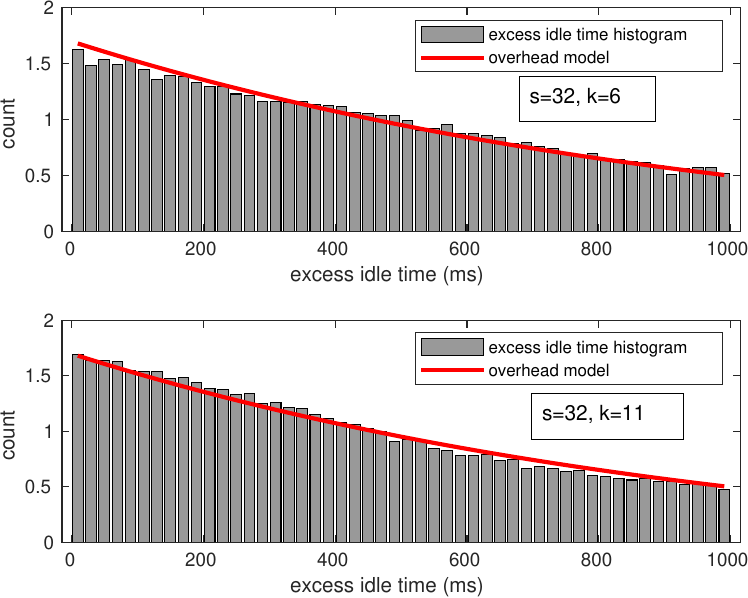}
  \caption{Distribution of idle times for barrier-mode jobs with $k=6$ and $k=11$ tasks queued on Spark cluster with $s=32$ workers.  This is the time, on a backlogged system, between when the system first has $k$ idle workers and when it begins servicing the next job.  The exponential shape is due to global offers triggered by the exponential job arrivals, and the truncation at $1000$ms is due to the polling timer.}
  \label{fig:bem-spark-task-wait-overhead}
\end{figure}
%
%
\section{Conclusions}
\label{sec:conclusion}
We have derived analytical expressions for the stability regions for parallel systems with blocking start and/or departure barriers, and also for $(s,k,l)$ 2-barrier systems, where jobs depart after only $l$ out of $k$ tasks finish.  We find that, even though some work is being discarded, the expected rate of useful work achieved by such a system can exceed that of the equivalent conventional BEM system.  In the case of 1-barrier $(s,k,l)$ systems we resorted to using a CTMC to model its behavior.  We also extended results from stochastic network calculus to derive waiting and sojourn time bounds for systems with blocking start barriers, including hybrid BEM systems which service a mix of barrier and non-barrier jobs with a random mix of different degrees of parallelism.  For pure BEM systems with fixed $k$, we found that for a given system utilization and number of servers, there is an optimal degree of parallelism that balances waiting time and job execution time.  Finally we performed extensive benchmarking experiments on a standalone spark cluster to validate the theoretical and simulation results. This led us to develop a model for the overhead introduced by how the Spark scheduler handles barrier mode jobs.
%
%
\balance
\bibliographystyle{IEEEtran}
\bibliography{IEEEabrv,ParallelSystems}
%
%
\end{document}